\documentclass[reqno,eucal]{amsart}

\usepackage[mathscr]{eucal}
\usepackage{graphicx}
\usepackage{xcolor}
\usepackage{amsmath,amsfonts,amssymb,amsthm,amscd}
\usepackage{longtable}
\usepackage{array}
\usepackage{float}
\usepackage{enumitem}
\usepackage{placeins}
\usepackage{subcaption}
\newtheorem{theorem}{Theorem}
\newtheorem{lemma}[theorem]{Lemma}

\newtheorem{corollary}[theorem]{Corollary}

\theoremstyle{definition}

\newtheorem{example}[theorem]{Example}

\newcommand{\Z}{{\mathbb Z}}
\newcommand{\R}{{\mathbb R}}
\newcommand{\C}{{\mathbb C}}

\newcommand{\I}{{\mathrm i}}
\newcommand{\e}{{\mathrm e}}
\newcommand{\od}{{\mathcal O}}

\begin{document}

\begin{center}
{\LARGE\bfseries
Functional relations in renormalization group methods\\
for a class of ordinary differential equations
\par}

\vspace{2.5em}

Atsuo Kuniba$^{1}$ and Rurika Motohashi$^{2}$

\vspace{1.5em}

University of Tokyo, Komaba, Tokyo 153-8902, Japan\\
$^{1}$Graduate School of Arts and Sciences,\quad
$^{2}$Department of Integrated Sciences
\end{center}

\vspace{2em}

\noindent\textbf{Abstract.} 
We develop a renormalization group (RG)-based perturbation scheme
for a class of ordinary differential equations,
including first-order systems with semisimple or nilpotent linear parts,
as well as scalar higher-order equations.
The key observation is that the secular coefficients
arising in naive perturbation theory satisfy an exact functional relation.
This yields, in a unified manner, several fundamental features of the RG method:
the renormalized amplitudes satisfy a closed functional relation
with a group-like structure,
the RG equation governing their slow dynamics is obtained directly,
the absence of secular terms is ensured to all orders,
and the relation between bare and renormalized amplitudes admits an explicit inversion.
The results extend earlier ones for second-order scalar equations.

\vspace{1.5em}

\section{Introduction}\label{sec:int}

Perturbation theories for ordinary differential equations (ODEs)
form a broad and well-developed subject, encompassing multiple time scale analysis,
normal form theory, exact WKB methods and related approaches
(cf.~\cite{BO,KT,Kuz,M}).
Among these, the renormalization group (RG) method provides
a systematic framework for eliminating secular terms and extracting
effective long-time dynamics from perturbative expansions
\cite{CGO,C1,C2,DHHJK,EFK,K,Kh}.

In this paper, we formulate an RG-based perturbation scheme for ODEs of the form
\begin{align}\label{int1}
\frac{dy}{dt} = \I M y + \varepsilon V(\varepsilon, \e^{\pm \I t}, y),
\end{align}
where $\I = \sqrt{-1}$, $y$ and $V$ are $n$-dimensional vectors, and $V \in \C[\varepsilon, \e^{\I t}, \e^{-\I t}, y]$
is polynomial in the components of $y$.
The matrix $M \in \mathrm{Mat}(n,\mathbb{Z})$ is assumed to be either semisimple or nilpotent
(precise definitions are given in \eqref{eq1} and \eqref{eq2}).
We also treat scalar $N$th-order ODEs of the form \eqref{deq0}.

As is well known, naive perturbation around $\varepsilon=0$ typically leads to resonance,
resulting in Fourier modes whose coefficients grow polynomially in time.
The RG method removes such secular terms by introducing renormalized amplitudes,
whose slow evolution is governed by the RG equation.

While various implementations of the RG method have been developed,
they are typically described in procedural terms,
and the structural mechanism underlying their success,
particularly at the level of all orders in $\varepsilon$,
is not always transparent.
A key observation of the present work is that the secular coefficients,
constructed as formal power series in $\varepsilon$ by an unambiguous perturbative scheme,
satisfy an exact functional relation
(Corollaries \ref{co:p=p3}, \ref{cor:fun}, and \eqref{pp32}),
which makes the underlying picture explicit.
From this relation, several fundamental features of the RG method follow
in a unified manner:

(i) The renormalized amplitudes $\mathscr{A}$,
defined as a special class of secular coefficients,
satisfy a closed functional relation among themselves,
revealing a group-like structure
(\eqref{AA3}, \eqref{AA32}, \eqref{fun2}).

(ii) The RG equation governing the slow dynamics of $\mathscr{A}$
is obtained directly from this functional relation
(\eqref{rg3}, \eqref{rg32}, \eqref{rgeq2}).

(iii) The absence of secular terms in the renormalized expansion
becomes manifest at all orders in $\varepsilon$
(\eqref{rs3}, \eqref{rs2}).

(iv) The relation between the bare amplitudes $A$
and the renormalized amplitudes $\mathscr{A}$ can be inverted explicitly,
yielding $A=A(\mathscr{A})$
(\eqref{inv3}, \eqref{inv4}, \eqref{inv1}).

\medskip
The functional relation for secular coefficients,
together with its consequences described above,
was first observed in \cite{K} for a class of second-order scalar ODEs.
In particular, that setting already covers a number of classical examples,
such as the Van der Pol, Mathieu, Duffing, and Rayleigh equations,
which have often served as benchmarks for the RG method.
The present work extends this structure to a broader class of systems,
including first-order systems with semisimple or nilpotent linear parts,
as well as higher-order scalar equations, and therefore encompasses
these classical cases as well.

The rest of the paper is organized as follows.
Sections \ref{sec:4} and \ref{sec:5} deal with first-order systems
with semisimple and nilpotent $M$, respectively.
Section \ref{sec:2} treats scalar $N$th-order ODEs.
Representative examples are also provided to demonstrate these features.
Section \ref{sec:c} contains concluding remarks.
Appendix \ref{app:n} presents numerical plots for Example \ref{ex:CD}.

Although the derivation of the functional relation and its consequences in this paper
is essentially elementary, it brings out an underlying structure that becomes transparent
at the level of formal power series in $\varepsilon$. Accordingly, no claims are made
regarding convergence or the accuracy of truncated expansions.

\section{First-Order ODE System with a Semisimple Coefficient Matrix}\label{sec:4}

We consider a system of first-order ODEs of the form
\begin{subequations}\label{eq1}
\begin{align}
\frac{dy}{dt} &= \I M y + \varepsilon V(\varepsilon,\e^{\pm \I t},y),
\label{yeq}\\
y= &\begin{pmatrix}y_1 \\ \vdots \\ y_n \end{pmatrix},
\quad
M=
\begin{pmatrix}m_1 &\cdots & 0
\\
\vdots & \ddots & \vdots
\\
0 &\cdots & m_n 
 \end{pmatrix},
 \quad 
V(\varepsilon,\e^{\pm \I t},y) 
=\begin{pmatrix}V_1(\varepsilon,\e^{\pm \I t}, y) 
\\ \vdots \\ V_n(\varepsilon,\e^{\pm \I t},y) \end{pmatrix}.
\label{ymf}
\end{align}
\end{subequations}
Throughout this section and the next, we will consider several equations 
formulated in terms of $n$-dimensional column vectors as in (\ref{ymf}).
However, for notational convenience, 
we will not strictly distinguish between row and column vectors
when referring to them in inline notation.
For instance,  the first relation in (\ref{ymf}) will be referred to as 
$y=(y_1,\ldots, y_n)$.

The matrix $M$ is diagonal with integer entries $m_1, \ldots, m_n \in \Z$, 
which are not assumed to be pairwise distinct.
For each $1 \le j \le n$, the function $V_j(\varepsilon, \e^{\pm \I t}, y)$ 
is a polynomial in $\varepsilon, y_1,\ldots, y_n$ and a
Laurent polynomial in $\e^{\I t}$,  i.e., 
$V_j(\varepsilon,\e^{\pm \I t}, y) \in 
\C[\varepsilon, \e^{\I t}, \e^{-\I t}, y_1, \ldots, y_n]$.
We do not assume $V(\varepsilon, \e^{\pm \I t}, y=0)=0$.

Switching from $y_j$ to $\tilde y_j = \e^{-\I m_j t} y_j$, the equation
\eqref{yeq} becomes
$d\tilde y/dt = \varepsilon\, \tilde V(\varepsilon, \e^{\pm \I t}, \tilde y)$,
where
$\tilde V_j(\varepsilon, \e^{\pm \I t}, \tilde y)
= \e^{-\I m_j t} V_j(\varepsilon, \e^{\pm \I t}, \e^{\I M t} \tilde y)$
satisfies the same condition stated above.
Thus one may actually assume $M=0$ without loss of generality.
For convenience, however, we keep them in the description.

\subsection{Naive perturbation}\label{sss:np1}
We set
$y= (y_1(\varepsilon,t), \ldots, y_n(\varepsilon,t))$,
and seek a formal power series solution of the form
\begin{align}\label{yf3}
y_j(\varepsilon,t) = \sum_{k\in \Z_{\ge 0}}\varepsilon^k\sum_{m \in \Z} f^{(k)}_{j,m}(t)\e^{\I mt},
\qquad f^{(k)}_{j,m}(t)\;\; \text{a polynomial in $t$}.
\end{align}
This corresponds to the expansion $y=\sum_{k\ge 0} \varepsilon^k y^{(k)}(t)$, i.e., 
\begin{align}\label{eexp}
\begin{pmatrix}y_1(\varepsilon,t) \\ \vdots \\ y_n (\varepsilon,t) \end{pmatrix}
= \begin{pmatrix}y_1^{(0)}(t) \\ \vdots  \\ y_n^{(0)}(t)  \end{pmatrix}
+ \varepsilon \begin{pmatrix}y_1^{(1)}(t) \\ \vdots \\  y_n^{(1)}(t)  \end{pmatrix}
+ \cdots,
\qquad y^{(k)}_j(t) = \sum_{m \in \Z} f^{(k)}_{j,m}(t)\e^{\I mt}.
\end{align}
Substitution of this into (\ref{yeq}) yields the following equations at each order of 
$\varepsilon$:
\begin{align}
\frac{dy^{(0)}}{dt} &= \I M y^{(0)},
\label{yeq0}
\\
\frac{dy^{(1)}}{dt} &= \I M y^{(1)}+ V(0,\e^{\pm \I t}, y^{(0)}),
\nonumber\\
& \vdots
\nonumber\\
\frac{dy^{(k)}}{dt} &= \I M y^{(k)}+ \Biggl[
V(\varepsilon, \e^{\pm \I t}, \sum_{0 \le j <  k}\varepsilon^j y^{(j)})
\Biggr]_{\varepsilon^{k-1}}.
\label{yeqk}
\end{align}
Here, and in what follows, $[X]_{\alpha^k}$ denotes the coefficient of $\alpha^k$ in the
expansion of $X$ as a formal series in $\alpha$.
The general solution to the first equation (\ref{yeq0}) is given by 
$y^{(0)}(t)= \e^{\I M t}A$ for an arbitrary constant vector $A =(A_1,\ldots, A_n)$, 
i.e.,
 \begin{align}\label{shokou}
 \begin{pmatrix}y_1^{(0)}(t) \\ \vdots  \\ y_n^{(0)}(t)  \end{pmatrix}
 =\begin{pmatrix}A_1\e^{\I m_1 t} \\ \vdots  \\ A_n \e^{\I m_n t} \end{pmatrix}.
 \end{align}
Given $y^{(0)}, \ldots, y^{(k-1)}$, 
a solution to the order $\varepsilon^k$ equation (\ref{yeqk}) 
is obtained by adding a special solution of the full inhomogeneous equation and 
a general solution of the homogeneous part,
$dy^{(k)}/dt=\I M y^{(k)}$.
The formal power series solutions thus constructed is {\em uniquely} determined by imposing the following condition:
\begin{align}\label{ycon3}
\begin{pmatrix}y_1 \\ \vdots \\ y_n \end{pmatrix}
= \begin{pmatrix} (A_1+ \od(\varepsilon t)) \e^{\I m_1 t}
+ \sum_{m \neq m_1}\od(\varepsilon t^0)\e^{\I m t}
 \\ \vdots \\
 (A_n+ \od(\varepsilon t)) \e^{\I m_n t}
+ \sum_{m \neq m_n}\od(\varepsilon t^0)\e^{\I m t}
 \end{pmatrix}.
 \end{align}
 Here and in what follows, $X = \mathcal{O}(\alpha)$ means that  
$X / \alpha$ is a formal power series whose terms remain finite as $\alpha \to 0$.
The condition \eqref{ycon3} implies that, in the above-mentioned general
solution, the coefficient $[y^{(k)}_j(t)]_{\mathrm{e}^{\mathrm{i}m_j t}}$
must be a polynomial in $t$ {\em without} a constant term whenever $k\ge 1$.
For $[y^{(k)}_j(t)]_{\mathrm{e}^{\mathrm{i}mt}}$ with $m\notin\{m_1,\ldots,m_n\}$,
constant (i.e., $t^0$) terms may occur also for $k\ge 1$.

Let  $Y(\varepsilon,t,A)= (Y_1(\varepsilon,t,A), \ldots, Y_n(\varepsilon,t,A))$ 
denote the unique solution thus constructed, and define  
 $P_{j,m}(\varepsilon,t,A)$ to be the coefficient in the expansion 
 \begin{align}\label{Yv3}
 Y(\varepsilon,t,A)= \begin{pmatrix}Y_1(\varepsilon,t,A) \\ \vdots \\ Y_n(\varepsilon,t,A) \end{pmatrix}
 = \sum_{m \in \Z}
 \begin{pmatrix}P_{1,m}(\varepsilon, t,A) \\ \vdots \\ P_{n,m}(\varepsilon,t,A) \end{pmatrix}
 \e^{\I mt}.
 \end{align}
 When no confusion is likely, we write this more concisely as
$Y  = \sum_{m \in \Z} P_m(\varepsilon,t,A)\e^{\I mt}$,
where $P_m(\varepsilon,t,A)$ denotes the vector 
$(P_{1,m}(\varepsilon,t,A), \ldots, P_{n,m}(\varepsilon,t,A))$.
We refer to each $P_{j,m}(\varepsilon,t,A)$ as a {\em secular coefficient}.
By definition,  $P_{j,m}(\varepsilon,t,A)$ is a formal power series in $\varepsilon$ and $t$ such that 
$[P_{j,m}(\varepsilon,t,A)]_{\varepsilon^k}$ is a polynomial in $t$ for any $k$.
By construction, it has the following behavior around $\varepsilon=0$:  
\begin{align}\label{pa3}
P_{j,m}(\varepsilon,t,A) = A_j \delta_{m,m_j} + \od(\varepsilon t^{\delta_{m,m_j}})
\quad 
\text{as a formal power series in $\varepsilon$ and $t$}.
\end{align}
The special case of $P_{j,m_j}(\varepsilon,t,A)$
corresponding to $m = m_j$
is called the {\em resonant secular coefficient}, and plays an important role. 
For instance, (\ref{pa3}) implies the relation:
\begin{align}\label{pa32}
A_j = P_{j, m_j}(\varepsilon, 0, A).
\end{align}

Owing to the recursive construction of $y^{(0)}, y^{(1)}, \ldots$ along 
the equations (\ref{yeq0})--(\ref{yeqk}), the secular coefficients acquire increasingly 
higher order power of $\varepsilon$ as $|m|$ becomes large.
Namely, the following property holds for $1 \le j \le n$.
\begin{align}\label{ped3}
P_{j,m}(\varepsilon,t,A) = \mathcal{O}(\varepsilon^{d_{j,m}}),
\quad d_{j,m} \rightarrow \infty\; \text{as }\; |m| \rightarrow \infty.
\end{align}

\subsection{Functional equation for secular coefficients}

From the construction in Section \ref{sss:np1}, we have the following lemma.
\begin{lemma}\label{le:ren3}
Let $s$ and $A=(A_1,\ldots, A_n)$ be arbitrary parameters.
The formal power series $y=y(\varepsilon,t)$ of the form (\ref{yf3}) 
that satisfies the conditions (i), (ii), and (iii) below is unique, 
and coincides with $Y(\varepsilon,t,A)$ in (\ref{Yv3}).
\begin{align*}
&(\I)\;   \text{$y(\varepsilon,t)$ satisfies the differential equation (\ref{yeq})}, \\
&(\I\I)\;   y(0,t) =  (A_1\e^{\I m_1t}, \ldots, A_n\e^{\I m_nt}), \\ 
&(\I\I\I)\; 
[y_j(\varepsilon,t)]_{\e^{\I m_jt}}|_{t=s} = P_{j,m_j}(\varepsilon,s,A)\quad  (1 \le j \le n).
\end{align*}
\end{lemma}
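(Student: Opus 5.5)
Existence is immediate: $Y(\varepsilon,t,A)$ satisfies (i) by construction. It satisfies (ii) by \eqref{shokou}. It satisfies (iii) because $[Y_j]_{\e^{\I m_j t}}=P_{j,m_j}(\varepsilon,t,A)$ by \eqref{Yv3}, so evaluating at $t=s$ gives exactly the right-hand side of (iii). The content of the lemma is therefore uniqueness, and I would prove it by induction on the order in $\varepsilon$, following the recursive scheme \eqref{yeq0}--\eqref{yeqk}.

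Let $y=\sum_k\varepsilon^k y^{(k)}$ satisfy (i)--(iii), and write $Y=\sum_k \varepsilon^k Y^{(k)}$. Condition (ii) says $y^{(0)}=Y^{(0)}$. Suppose $y^{(j)}=Y^{(j)}$ for all $j<k$, with $k\ge 1$. By \eqref{yeqk}, the inhomogeneous term at order $\varepsilon^k$ depends only on $y^{(0)},\ldots,y^{(k-1)}$. Hence $z:=y^{(k)}-Y^{(k)}$ satisfies the homogeneous equation $dz/dt=\I M z$. Both functions have the form \eqref{eexp}, with coefficients of $\e^{\I m t}$ polynomial in $t$. So I need the elementary fact that the solutions of $dz_j/dt=\I m_j z_j$ within this class are exactly $z_j=c_j\e^{\I m_j t}$ with $c_j$ constant.

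To prove this fact, substitute $z_j=\sum_m g_m(t)\e^{\I m t}$. Comparing coefficients of each $\e^{\I m t}$, which is legitimate since exponentials with distinct $m$ are linearly independent over polynomials, gives $g_m'+\I(m-m_j)g_m=0$. For polynomial $g_m$ with $m\ne m_j$, comparing leading degrees forces $g_m=0$. For $m=m_j$ it gives $g_m'=0$, so $g_m$ is constant.

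Condition (iii) fixes these constants. Taking the $\varepsilon^k$ coefficient of (iii) for both $y$ and $Y$ gives $[y^{(k)}_j]_{\e^{\I m_j t}}|_{t=s}=[Y^{(k)}_j]_{\e^{\I m_j t}}|_{t=s}$. Therefore $c_j=0$ for every $j$, so $y^{(k)}=Y^{(k)}$, which completes the induction.

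The main point needing care is the interpretation of (iii) when $s$ is a free parameter. The identity must be read order by order in $\varepsilon$, as polynomials in $s$ (or for a fixed value of $s$). In either reading, extracting the $\varepsilon^k$ coefficient is legitimate, because $[P_{j,m_j}]_{\varepsilon^k}$ is a polynomial in $t$. Once this is granted, the argument above goes through unchanged.
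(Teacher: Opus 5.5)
Your proof is correct and follows essentially the same argument as the paper, which gives no separate proof and states the lemma as a direct consequence of the construction in Section~\ref{sss:np1}. The argument is order-by-order induction in $\varepsilon$: at each order $k\ge 1$ the only freedom is a homogeneous solution whose $j$th component is a constant times $\mathrm{e}^{\mathrm{i}m_j t}$, and evaluating the resonant coefficient at the single point $t=s$ fixes that constant. You have made explicit two details the paper takes for granted, namely the degree argument excluding non-resonant homogeneous modes and the order-by-order reading of condition (iii).
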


\begin{lemma}\label{le:ren2}
Let $s$ and $B=(B_1,\ldots, B_n)$ be arbitrary parameters.
Then the formal power series
\begin{align}\label{yB}
\sum_{m\in \Z}
 \begin{pmatrix}P_{1,m}(\varepsilon, t-s,B) \\ \vdots \\ P_{n,m}(\varepsilon,t-s,B) \end{pmatrix}
 \e^{\I mt}
 \end{align}
 is also a solution to the equation (\ref{yeq}).
\end{lemma}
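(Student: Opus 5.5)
The idea is to exploit the fact that the equation (\ref{yeq}) depends on $t$ only through $\e^{\pm \I t}$. The solution $Y(\varepsilon,t,B)$ is a formal series in which the explicit polynomial $t$-dependence (inside the secular coefficients) and the oscillatory dependence (through $\e^{\I mt}$) can be treated as independent variables. Concretely, I introduce a formal two-variable object
\begin{align*}
\widehat Y(\varepsilon,\tau,t,B)=\sum_{m\in\Z}P_m(\varepsilon,\tau,B)\,\e^{\I mt},
\end{align*}
so that $Y(\varepsilon,t,B)=\widehat Y(\varepsilon,t,t,B)$. The claimed series (\ref{yB}) is then $\widehat Y(\varepsilon,t-s,t,B)$.

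The first step is to turn the statement that $Y$ solves (\ref{yeq}) into a statement about the coefficients $P_m$. Substituting (\ref{Yv3}) into (\ref{yeq}) and using $\frac{d}{dt}(P_m\e^{\I mt})=(\partial_tP_m+\I mP_m)\e^{\I mt}$, I obtain, order by order in $\varepsilon$, the identity
\begin{align*}
\sum_m\bigl(\partial_\tau P_m(\varepsilon,\tau,B)+\I mP_m(\varepsilon,\tau,B)-\I MP_m(\varepsilon,\tau,B)\bigr)\e^{\I mt}
=\varepsilon\,V\Bigl(\varepsilon,\e^{\pm\I t},\sum_mP_m(\varepsilon,\tau,B)\e^{\I mt}\Bigr),
\end{align*}
evaluated at $\tau=t$.

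The main obstacle is to justify that this identity holds with $\tau$ and $t$ independent, that is, as an identity in $\C[\tau][[\varepsilon]]$-valued Laurent series in $\e^{\I t}$, and not only on the diagonal $\tau=t$. The concern is that on the diagonal, polynomial-in-$t$ factors and exponentials might conceivably mix. The right-hand side is a polynomial in the components and in $\e^{\pm\I t}$, so after expansion it is again of the form $\sum_m(\text{polynomial in }\tau)\e^{\I mt}$, with only finitely many $m$ at each order in $\varepsilon$ by (\ref{ped3}). The functions $t^k\e^{\I mt}$ are linearly independent, so an identity $\sum_m Q_m(t)\e^{\I mt}=0$ with polynomial $Q_m$ forces every $Q_m$ to vanish identically. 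I apply this order by order in $\varepsilon$ to the difference of the two sides: the coefficient of each $\e^{\I mt}$ is a polynomial in $\tau$ that vanishes at $\tau=t$ for all $t$, hence vanishes identically. Therefore the identity holds for independent $\tau$ and $t$.

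Granting the two-variable identity, I substitute $\tau=t-s$. Since $\partial_t(t-s)=1$, we have
\begin{align*}
\frac{d}{dt}\bigl[P_m(\varepsilon,t-s,B)\e^{\I mt}\bigr]=\bigl(\partial_\tau P_m+\I mP_m\bigr)\big|_{\tau=t-s}\,\e^{\I mt}.
\end{align*}
The identity therefore says exactly that $\widehat Y(\varepsilon,t-s,t,B)$ satisfies (\ref{yeq}).

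A secondary point to check is that (\ref{yB}) is a well-defined formal series. The coefficient of each $\varepsilon^k$ is a polynomial in $t-s$, and only finitely many $m$ contribute at each order by (\ref{ped3}), so it is of the form (\ref{yf3}) with coefficients polynomial in $t$ and $s$. This completes the plan. As a sanity check, the same argument read in reverse shows that $\e^{\I M s}$-type shifts are not involved: the shift acts only on the secular polynomial parts.
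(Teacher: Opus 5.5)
Your proposal is correct and follows essentially the same route as the paper. The paper extracts the coefficient of $\e^{\I mt}$ to obtain the system \eqref{peq3}, observes that it is autonomous (no explicit $t$ outside the $P_l$), and then shifts $t\to t-s$ in the $P_m$; your identity in independent variables $(\tau,t)$ is the same statement, with the linear independence of the functions $t^k\e^{\I mt}$ made explicit to justify the coefficient extraction that the paper takes for granted.
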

\begin{proof}
The nontrivial point is that shifting $t$ to $t-s$ inside 
the coefficients $P_{j,m}$ while leaving the exponential factor 
$\e^{\I mt}$ (i.e., not replacing it with $\e^{\I m (t - s)}$),  
preserves the validity of the solution.
To justify this, regard the expression 
(\ref{Yv3}) as a forma Laurent series in $\e^{\I t}$.

By substituting $Y = \sum_{m \in \Z} P_m(\varepsilon,t,A)\e^{\I mt}$ into 
(\ref{yeq}) and extracting the coefficient of $\e^{\I mt}$, we obtain 
the following infinite system of equations for the secular coefficients:
\begin{align}\label{peq3}
\frac{\partial P_m(\varepsilon,t,A)}{\partial t}
+ \I m P_m(\varepsilon,t,A) = \I MP_m(\varepsilon,t,A)
+ \left[\varepsilon
V\bigl(\varepsilon, \e^{\pm \I t}, \sum_{l\in \Z}P_l(\varepsilon,t,A)\e^{\I lt}
\bigr)\right]_{\e^{\I mt}}
\quad (m \in \Z).
\end{align}
In general, the last term in the RHS involves infinite sums.
However, thanks to (\ref{ped3}), these sums are actually finite at each order  
in $\varepsilon$, and thus make sense as a formal power series in 
$\varepsilon$.
Note that the system (\ref{peq3}) is fully autonomous:
The variable $t$ appears only through $\{P_l(\varepsilon, t,A)\mid l \in \Z\}$,
regardless of whether the original equation (\ref{yeq}) is autonomous or not.
This is a direct consequence of the assumption
$V_j(\varepsilon,\e^{\pm \I t}, y) 
\in \C[\varepsilon, \e^{\I t}, \e^{-\I t}, y_1, \ldots, y_n]$.
Therefore, replacing $t$ with $t - s$ in $P_m(\varepsilon, t, A)$  
yields another formal solution to (\ref{yeq}), as claimed.
\end{proof}

We refer to $A=(A_1,\ldots, A_n)$ as the {\em bare amplitudes}, and define the \emph{renormalized amplitudes}
to be the resonant secular coefficients. That is,
\begin{align}\label{ra3}
\mathscr{A}(\varepsilon,t,A)&=(\mathscr{A}_1(\varepsilon,t,A),\ldots,
\mathscr{A}_n(\varepsilon,t,A)),
\qquad
\mathscr{A}_j(\varepsilon,t,A) := P_{j,m_j}(\varepsilon,t,A).
\end{align}

The main result of this section is the following theorem and its consequences.
\begin{theorem}\label{th:yy3}
For any $s,t,$ and  $A=(A_1,\ldots, A_n)$, 
the following equality holds:
\begin{align}
\sum_{m \in Z}P_{j,m}(\varepsilon, t,A)\e^{\I mt}
= \sum_{m \in \Z} 
P_{j,m}(\varepsilon,t-s,\mathscr{A}(\varepsilon,s,A))\e^{\I mt}
\quad (1 \le j \le n).
\label{yy3}
\end{align}
\end{theorem}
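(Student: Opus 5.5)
The plan is to apply the uniqueness statement of Lemma \ref{le:ren3} to the right-hand side of (\ref{yy3}). We set $B=\mathscr{A}(\varepsilon,s,A)$ and let $\tilde y(\varepsilon,t)$ denote the formal series (\ref{yB}) for this $B$. If $\tilde y$ satisfies conditions (i), (ii), (iii) of Lemma \ref{le:ren3} with the same $s$ and $A$, then $\tilde y=Y(\varepsilon,t,A)$, which is exactly (\ref{yy3}). One preliminary point is that $B$ depends on $\varepsilon$, and $P_{j,m}(\varepsilon,t-s,B)$ must still make sense as a formal power series. This holds because $B=A+\od(\varepsilon s)$ by (\ref{pa3}), and each $[P_{j,m}]_{\varepsilon^k}$ is polynomial in $B$. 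Moreover, (\ref{ped3}) still guarantees that, at each order in $\varepsilon$, only finitely many $m$ contribute.

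The conditions are checked in turn.

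\textbf{Condition (i).} Lemma \ref{le:ren2} is stated for arbitrary parameters $B$. Its proof uses only the autonomous system (\ref{peq3}), which holds identically in $B$. Substituting an $\varepsilon$-dependent $B$ therefore keeps (\ref{peq3}) valid order by order, since no $t$-derivative acts on $B$. Hence $\tilde y$ solves (\ref{yeq}).

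\textbf{Condition (iii).} Setting $t=s$ in the resonant component gives
$[\tilde y_j]_{\e^{\I m_j t}}|_{t=s}=P_{j,m_j}(\varepsilon,0,B)=B_j$
by (\ref{pa32}) with $A$ replaced by $B$. By the definition (\ref{ra3}), $B_j=\mathscr{A}_j(\varepsilon,s,A)=P_{j,m_j}(\varepsilon,s,A)$, which is (iii).

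\textbf{Condition (ii).} At $\varepsilon=0$ we have $B|_{\varepsilon=0}=A$, because $\mathscr{A}_j=A_j+\od(\varepsilon s)$. From (\ref{pa3}), $P_{j,m}(0,t-s,A)=A_j\delta_{m,m_j}$. Hence $\tilde y(0,t)=(A_1\e^{\I m_1t},\ldots,A_n\e^{\I m_nt})$.

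The main obstacle is that Lemma \ref{le:ren3} requires $y$ to be of the form (\ref{yf3}), with coefficients polynomial in $t$ at each order, while $s$ is a free parameter. This needs a careful argument. The coefficients $[\tilde y]_{\varepsilon^k}$ are polynomials in $t$ and $s$, since $B$ contributes polynomials in $s$ and $P(\cdot,t-s,\cdot)$ contributes polynomials in $t-s$. One could also worry that uniqueness in Lemma \ref{le:ren3} needs the normalization (\ref{ycon3}) (no constant term in the resonant coefficient), which $\tilde y$ does not satisfy. However, Lemma \ref{le:ren3} replaces that normalization by condition (iii), the prescribed value at $t=s$. I would therefore rely on the uniqueness argument behind Lemma \ref{le:ren3}, which runs order by order as follows. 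At order $\varepsilon^k$, the inhomogeneous term is fixed by the lower orders. The homogeneous freedom is a constant $c_j$ multiplying $\e^{\I m_j t}$ in each component. Condition (iii) at order $\varepsilon^k$ fixes $c_j$ uniquely, with $s$ treated as a constant parameter. This shows the uniqueness survives when $s$ enters polynomially, and completes the proof.
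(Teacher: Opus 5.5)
Your proposal is correct and follows the paper's proof: you verify conditions (i)--(iii) of Lemma~\ref{le:ren3} for the right-hand side, using Lemma~\ref{le:ren2} for (i) and the normalization \eqref{pa3}/\eqref{pa32} for (ii) and (iii). Your additional remarks only make explicit what the paper leaves implicit: substituting the $\varepsilon$-dependent $B=\mathscr{A}(\varepsilon,s,A)$ into formal series, and uniqueness in Lemma~\ref{le:ren3} holding order by order with $s$ treated as a parameter.
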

\begin{proof}
It suffices to verify that the RHS of (\ref{yy3}) satisfies the conditions (i), (ii) and (iii)
in Lemma \ref{le:ren3}.
Condition (i) follows from Lemma \ref{le:ren2}.
Conditions (ii) and (iii) are shown as follows:
\begin{align*}
&\sum_{m \in \Z} 
P_{j,m}(0,t-s,\mathscr{A}(0,s,A))\e^{\I mt}
\overset{(\ref{pa3})}{=}
\sum_{m \in \Z} \mathscr{A}_j(0,s,A)\delta_{m,m_j}\e^{\I mt}
\overset{(\ref{ra3})}{=} P_{j,m_j}(0,s,A)\e^{\I m_jt} 
\overset{(\ref{pa3})}{=} A_j\e^{\I m_jt}, 
\\
&\left. \left[\sum_{m \in \Z} 
P_{j,m}(\varepsilon,t-s,\mathscr{A}(\varepsilon,s,A))\e^{\I mt}
\right]_{\e^{\I m_jt}} \right|_{t=s}
= P_{j,m_j}(\varepsilon,0,\mathscr{A}(\varepsilon,s,A))
\overset{(\ref{pa3})}{=}\mathscr{A}_j(\varepsilon,s,A)
\overset{(\ref{ra3})}{=} P_{j,m_j}(\varepsilon,s,A).
\end{align*} 
\end{proof}

\begin{corollary}\label{co:p=p3}
The secular coefficients satisfy the following functional relation:
\begin{align}\label{pp3}
P_{j,m}(\varepsilon, t,A)
= P_{j,m}(\varepsilon,t-s,\mathscr{A}(\varepsilon,s,A))
\qquad (1 \le j \le n, \,m \in \Z).
\end{align}
\end{corollary}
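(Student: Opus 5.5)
The plan is to read off Corollary \ref{co:p=p3} from Theorem \ref{th:yy3} by comparing coefficients of $\e^{\I mt}$ on the two sides of (\ref{yy3}). To justify this, I will treat both sides of (\ref{yy3}) as formal Laurent series in an independent symbol $z=\e^{\I t}$. The coefficients are then formal power series in $\varepsilon$ whose $\varepsilon^k$-parts are polynomials in $t$ (and $s$). If the identity holds in this formal sense, then the coefficients of $z^m$ must agree for every $m\in\Z$, and that agreement is exactly (\ref{pp3}).

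The main obstacle is that Theorem \ref{th:yy3} was proved as an equality of solutions, where $t$ enters both polynomially and through $\e^{\I mt}$. One might worry that expressions such as $t\,\e^{\I mt}$ and $\e^{\I m't}$ could mix, so that the Fourier-type decomposition is not unique. I would settle this in two ways.

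\begin{itemize}
\item First, for fixed order $\varepsilon^k$, only finitely many $m$ contribute, by (\ref{ped3}). The functions $t^a\e^{\I mt}$ are linearly independent over $\C$. Hence the decomposition $\sum_m f_m(t)\e^{\I mt}$ with polynomial $f_m$ is unique, order by order in $\varepsilon$.
\item Second, and more cleanly, I would rerun the proof of Theorem \ref{th:yy3} entirely at the level of $z$-Laurent series. The proof of Lemma \ref{le:ren2} already works in this setting: it uses the autonomous system (\ref{peq3}) for the family $\{P_m\}$. The uniqueness in Lemma \ref{le:ren3} is the uniqueness of the recursive construction, which is also coefficientwise in $\e^{\I mt}$.
\end{itemize}

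Either route shows that (\ref{yy3}) holds as an identity of coefficient families. Extracting the coefficient of $\e^{\I mt}$ for each $j$ and $m$ then gives (\ref{pp3}).

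I would also check that the right-hand side of (\ref{pp3}) is well defined as a formal series. The argument $\mathscr{A}(\varepsilon,s,A)=A+\od(\varepsilon s)$ is substituted into $P_{j,m}$, which is polynomial in $A$ at each order in $\varepsilon$. The correction term carries at least one power of $\varepsilon$, so each $\varepsilon^k$-coefficient involves only finitely many terms and is a polynomial in $t$, $s$ and $A$.
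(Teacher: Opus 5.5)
Your proposal is correct and follows the same route as the paper, which obtains Corollary~\ref{co:p=p3} directly from Theorem~\ref{th:yy3} by comparing the coefficients of $\e^{\I mt}$ on both sides of \eqref{yy3}. Your extra remarks only make explicit what the paper leaves implicit: the harmonic decomposition is unique order by order in $\varepsilon$, by \eqref{ped3} and the linear independence of the $t^a\e^{\I mt}$, and substituting $\mathscr{A}(\varepsilon,s,A)=A+\od(\varepsilon s)$ is well defined.
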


In particular, from the definition (\ref{ra3}), 
the case $m=m_j$ yields 
a closed functional equation
for the renormalized amplitudes:
\begin{align}\label{AA3}
\mathscr{A}_j(\varepsilon,t,A)
= \mathscr{A}_j(\varepsilon,t-s,\mathscr{A}(\varepsilon,s,A))
\qquad (1 \le j \le n).
\end{align}
Replacing $(t,s)$ with $(t+s,t)$ brings (\ref{AA3}) into 
$\mathscr{A}_j(\varepsilon,t+s,A)
= \mathscr{A}_j(\varepsilon,s,\mathscr{A}(\varepsilon,t,A))
= P_{j,m_j}(\varepsilon,s,\mathscr{A}(\varepsilon,t,A))$.
Differentiating this identity with respect to $s$ at $s=0$, we obtain the RG equation:
\begin{align}\label{rg3}
\frac{d}{dt}\mathscr{A}_j(\varepsilon,t,A) = 
\left. \frac{\partial}{\partial s}P_{j,m_j}(\varepsilon,s,\mathscr{A}(\varepsilon,t,A))\right|_{s=0}
\quad (1 \le j \le n).
\end{align}
This is a system of first-order autonomous differential equations for 
the renormalized amplitudes 
$\mathscr{A}_1(\varepsilon,t,A)$, $\ldots$, $\mathscr{A}_n(\varepsilon,t,A)$.

Suppose that the secular coefficients are obtained by a naive perturbation as 
\begin{align}\label{rg35}
P_{j,m}(\varepsilon,t,A) = \delta_{m,m_j}A_j + 
\sum_{k\ge 1, l, r_1,\ldots, r_n \ge 0}
B_{j,m;k,l,r_1, \ldots, r_n}
\varepsilon^k t^lA_1^{r_1}\cdots A_n^{r_n},
\end{align}
where the coefficient $B_{j,m;k,l, r_1,\ldots,r_n}$ vanishes when
$(m,l)=(m_j,0)$, reflecting \eqref{pa3}.
Then the RG equation (\ref{rg3})  is written as
\begin{equation}\label{rg4}
\frac{d{\mathscr A}_j}{dt} = 
\sum_{k\ge 1, r_1,\ldots, r_n \ge 0}
B_{j,m_j;k,1,r_1,\ldots, r_n}
\varepsilon^{k}\mathscr{A}_1^{r_1}\cdots \mathscr{A}_n^{r_n}
\qquad (1 \le j \le n),
\end{equation}
where $\mathscr{A}_j = \mathscr{A}_j(\varepsilon,t,A)$.
Note that the dynamics (\ref{rg4}) is {\em slow} in the sense that 
the RHS is at least of order $\mathcal{O}(\varepsilon)$.

By applying Theorem \ref{th:yy3} with $s=t$ and 
using the definition (\ref{ra3}),
the naive perturbative solution (\ref{Yv3}) can be rewritten 
in a form where all secular dependence on $t$ is absorbed 
into the renormalized amplitudes:
\begin{align}\label{rs3}
Y_j(\varepsilon,t,A) 
= \sum_{m \in \Z} 
P_{j,m}(\varepsilon,0,\mathscr{A}(\varepsilon,t,A))\e^{\I mt}
\quad (1 \le j \le n).
\end{align}
The renormalized expansion (\ref{rs3}), together with the RG equation 
(\ref{rg3}) completes our RG approach to the original equation (\ref{yeq}).

The definition (\ref{ra3}) for the renormalized amplitude 
$\mathscr{A}=\mathscr{A}(\varepsilon,t,A)$
in terms of the bare one $A$ can be inverted as 
\begin{align}\label{inv3}
A_j = A_j(\varepsilon,t,\mathscr{A})
=P_{j,m_j}(\varepsilon,-t,\mathscr{A})
\qquad (1 \le j \le n).
\end{align}
This identity follows from a combination of  (\ref{pa32}) and 
(\ref{pp3}) with $m=m_j$ and the pair $(s,t)$ replaced by $(t,0)$.
In this way, the resonant secular coefficients $P_{j,m_j}$ 
provide the {\em inversion relations} 
 (\ref{ra3}) and (\ref{inv3}), which enable one to express the 
bare amplitude from the renormalized one, and vice versa.

The functional relation \eqref{AA3}  for the renormalized amplitude \eqref{ra3} has the form
\begin{equation}
\mathscr{A}(t,A)=\mathscr{A}(t-s,\mathscr{A}(s,A)),
\end{equation}
where the dependence on $\varepsilon$ is suppressed.
This shows that $\{\mathscr{A}(t,\cdot)\}_{t\in\mathbb{R}}$ forms a
(formal) one-parameter group under composition. 
Hence there exists a
(formal) vector field $X$ on the amplitude space such that
$\mathscr{A}(t,A)=\exp(tX)\,A$.
The RHS of the RG equation~\eqref{rg4}  evaluated at $\mathscr{A}=A$ 
gives the coefficients of $\partial/\partial A_j$ in $X$.

\subsection{Autonomous case}\label{ss:auto}

Let us consider the autonomous case of the equation (\ref{yeq}), where 
$V$ takes the form $V(\varepsilon, y)$ not including $\e^{\I t}$ explicitly and depends on $t$ only through $y=y(t)$.
For any parameter $u$, define
\begin{align}\label{yh}
Y_u(\varepsilon,t,A) 
 = \sum_{m \in \Z}
 \begin{pmatrix}P_{1,m}(\varepsilon, t,\e^{-\I M u }A) \\ \vdots \\ P_{n,m}(\varepsilon,t,\e^{-\I M u}A) \end{pmatrix}
 \e^{\I m(t+u)},
 \end{align}
 where $M$ is the diagonal matrix in (\ref{ymf}), so that
 $\e^{-\I M u}A=(\e^{-\I m_1 u}A_1,\ldots, \e^{-\I m_n u}A_n)$.
From Lemma \ref{le:ren2}, we know that (\ref{yB})  with  $(t,s,B)$ replaced by $(t, u,\e^{-\I M u} A)$
gives a solution to (\ref{yeq}).
Moreover,  under the assumption that $V$ is autonomous, the solution remains valid even after shifting 
$t$ to $t+u$. 
Thus, $Y_u(\varepsilon,t,A)$ is indeed a solution to the autonomous case of (\ref{yeq}).
Now, we show that $Y_u(\varepsilon,t,A) $ furthermore satisfies the property (\ref{shokou}) at $\varepsilon=0$ 
and the behavior (\ref{ycon3}).
Both can be easily verified using (\ref{pa3}) as follows:
\begin{align}
Y_u(0,t,A)&= \sum_{m \in \Z}
 \begin{pmatrix}P_{1,m}(0, t,\e^{-\I M u }A) 
 \\ \vdots \\ 
 P_{n,m}(0,t,\e^{-\I M u}A) \end{pmatrix}
 \e^{\I m(t+u)}
=
 \begin{pmatrix} 
 \e^{-\I m_1 u}A_1 \e^{\I m_1(t+u)}
 \\ \vdots \\ 
\e^{-\I m_n u}A_n \e^{\I m_n(t+u)}
\end{pmatrix},
\\
\label{yh2}
\begin{split}
Y_u(\varepsilon,t,A) 
& = \sum_{m \in \Z}
 \begin{pmatrix}
\delta_{m,m_1}\e^{-\I m_1 u}A_1+\mathcal{O}(\varepsilon t^{\delta_{m,m_1}})
 \\ \vdots \\
 \delta_{m,m_n}\e^{-\I m_n u}A_n+\mathcal{O}(\varepsilon t^{\delta_{m,m_n}})
 \end{pmatrix}
 \e^{\I m(t+u)}
 \\
 & =  \begin{pmatrix}
 \bigl((A_1+\e^{\I m_1 u}\mathcal{O}(\varepsilon t)\bigr) \e^{\I m_1t}
 + \sum_{m \neq m_1}\e^{\I m u}\mathcal{O}(\varepsilon)\e^{i m t}
 \\ \vdots \\
 \bigl((A_n+\e^{\I m_n u}\mathcal{O}(\varepsilon t)\bigr) \e^{\I m_nt}
 + \sum_{m \neq m_n}\e^{\I m u}\mathcal{O}(\varepsilon)\e^{i m t}
 \end{pmatrix}.
 \end{split}
\end{align}
The behavior described in (\ref{yh2}) is equivalent to (\ref{ycon3}).
Therefore, by the uniqueness of the solution discussed between (\ref{shokou}) and (\ref{ycon3}), we conclude that
\begin{align}\label{y=yh}
Y_u(\varepsilon,t,A) = Y(\varepsilon,t,A) \; \;\text{for autonomous $V$},
\end{align}
for any $u$. 
As a corollary, comparing (\ref{yh}) and (\ref{Yv3}), we find that 
the secular coefficients satisfy
\begin{align}\label{phom}
P_{j,m}(\varepsilon,t,A)  = \e^{\I m u} P_{j,m}(\varepsilon,t, \e^{-\I M u}A).
\end{align}
The property \eqref{phom} implies that the coefficients $B_{j,m;k,l,r_1,\ldots, r_n}$ in 
\eqref{rg35} vanish unless $r_1m_1+\cdots + r_nm_n=m$ is satisfied.
As a consequence, the resulting RG equation (\ref{rg4}) is invariant under the transformation 
$\mathscr{A}_j \mapsto z^{m_j}\mathscr{A}_j$ for any $z$.
Such a homogeneous system 
is regarded as a {\em normal form} of the original equation. 
See \cite{C2, EFK, M} and references therein.

In the further special case $M=0$, 
one readily verifies that $P_{j,m}(\varepsilon,t,A)=0$ for $m\neq0$, which implies
\begin{align}\label{ypa}
Y_j(\varepsilon,t,A)=P_{j,0}(\varepsilon,t,A)=\mathscr{A}_j(\varepsilon,t,A).
\end{align}
Hence the RG equation reduces to the original system
$d\mathscr{A}_j/dt=\varepsilon\,V_j(\varepsilon,\mathscr{A})$.

\subsection{Examples}\label{ss:exss}

In this subsection, we present two explicit examples within the general framework.
They illustrate the structure of the secular coefficients and the RG equation in concrete settings,
and lead to nonlinear amplitude-phase equations.

\begin{example} (A non-autonomous two-dimensional system)\label{ex:CD}
Consider a two-dimensional non-autonomous system:
\begin{align}\label{ex1:eq}
\frac{d}{dt}
\begin{pmatrix} y_1 \\ y_2 \end{pmatrix}
=
\begin{pmatrix} \I y_1 \\  -\I y_2 \end{pmatrix}
+ \varepsilon 
\begin{pmatrix} y_1 y_2+ \e^{-\I t} y_2  \\  y_1y_2 + \e^{\I t}y_1 \end{pmatrix},
\end{align}
which is an example of  \eqref{yeq}--\eqref{ymf} with $M=\mathrm{diag}(1,-1)$.
The secular coefficients $P_{1,m}= P_{1,m}(\varepsilon, t, (A_1, A_2))$ 
for $|m|\le 3$ up to $\mathcal{O}(\varepsilon^5)$ read
\begin{align*}
P_{1,-3}&= -\frac{A_2^2 \varepsilon^2}{12}+
\frac{1}{432} A_2^2 \varepsilon^4 \left(18 A_1 A_2^2-72 \I A_1 A_2 t+12 A_1 A_2-72 A_1-24 \I t-1\right),
\\
P_{1,-2}&=\frac{\I A_2 \varepsilon}{3}-\frac{1}{54} \I A_2 \varepsilon^3 \left(9 A_1 A_2^2-18 \I A_1 A_2 t-18 A_1 A_2-6 \I t-2\right),
\\
P_{1,-1}&=-\frac{1}{2} A_1 (A_2-1) A_2 \varepsilon^2
+\frac{1}{36} A_1 A_2 \varepsilon^4 \left(-18 \I A_1 A_2^2 t-27 A_1 A_2^2+27 A_1 A_2-6 \I A_2 t-25 A_2+11\right),
\\
P_{1,0}&= \I A_1 A_2 \varepsilon+\frac{1}{18} \I A_1 A_2 \varepsilon^3 (18 A_1 A_2-9 A_1+11),
\\
P_{1,1}&= A_1-\frac{1}{3} \I A_1 \varepsilon^2 t (3 A_1 A_2+1)
\\
&-\frac{1}{54} A_1 \varepsilon^4 t\left(\I(54 A_1^2 A_2^2-9 A_1^2 A_2-27 A_1 A_2^2+57 A_1 A_2+2)+3 t (3 A_1 A_2+1)^2\right),
\\
P_{1,2}&= \frac{1}{12} \I A_1^2 \varepsilon^3 (6 A_1 A_2-8 A_2+1),
\\
P_{1,3}&= -\frac{A_1^2 \varepsilon^2}{6}
+\frac{A_1^2 \varepsilon^4 \left(90 A_1^2 A_2+360 \I A_1 A_2 t-300 A_1 A_2+9 A_1+120 \I t-62\right)}{1080}.
\end{align*}
The other series of the secular coefficients are given by 
$P_{2,m}(\varepsilon,t, (A_1,A_2)) =  \left.P_{1,-m}(\varepsilon,t, (A_2,A_1))\right|_{\I \rightarrow -\I}$.
The functional relation \eqref{pp3} can be checked in small orders of $\varepsilon$.

The renormalized amplitudes are defined by 
$\mathscr{A}_1 = P_{1,1}$ and $\mathscr{A}_2= P_{2,-1}$.
According to \eqref{rg3}, their dynamics is governed by the RG equation:
\begin{equation}\label{ex1:AA}
\begin{split}
\frac{d{\mathscr A}_1}{dt} 
&= -\frac{\I \varepsilon^2}{3} \mathscr{A}_1\left(1+3 \mathscr{A}_1 \mathscr{A}_2\right)
-\frac{\I \varepsilon^4}{54}\mathscr{A}_1(2+57 \mathscr{A}_1 \mathscr{A}_2
-27 \mathscr{A}_1 \mathscr{A}_2^2-9 \mathscr{A}_1^2 \mathscr{A}_2
+54 \mathscr{A}_1^2 \mathscr{A}_2^2)+\mathcal{O}(\varepsilon^6),
\\
\frac{d{\mathscr A}_2}{dt} 
&= \frac{\I \varepsilon^2}{3} \mathscr{A}_2\left(1+3 \mathscr{A}_1 \mathscr{A}_2\right)
+\frac{\I \varepsilon^4}{54}\mathscr{A}_2(2+57 \mathscr{A}_1 \mathscr{A}_2
-9 \mathscr{A}_1 \mathscr{A}_2^2-27 \mathscr{A}_1^2 \mathscr{A}_2
+54 \mathscr{A}_1^2 \mathscr{A}_2^2)+\mathcal{O}(\varepsilon^6).
\end{split}
\end{equation}
The renormalized expansion \eqref{rs3} is given by 
\begin{equation}\label{ex1:yy}
\begin{split}
Y_1 &=\mathscr{A}_1 \e^{i t}
+ \frac{\varepsilon}{3} \left(
 3 i \mathscr{A}_1 \mathscr{A}_2
 + i \mathscr{A}_2 \e^{-2 i t}
\right)
+ \frac{\varepsilon^2}{12} \left(
 6 \mathscr{A}_1 \mathscr{A}_2 \e^{-i t}
 - 6 \mathscr{A}_1 \mathscr{A}_2^{2} \e^{-i t}
 - \mathscr{A}_2^{2} \e^{-3 i t}
 - 2 \mathscr{A}_1^{2} \e^{3 i t}
\right)+ \mathcal{O}(\varepsilon^3),
\\
Y_2 &=\mathscr{A}_2 \e^{-i t}
- \frac{\varepsilon}{3} \left(
 3 i \mathscr{A}_1 \mathscr{A}_2
 + i \mathscr{A}_1 \e^{2 i t}
\right)
+ \frac{\varepsilon^2}{12} \left(
 6 \mathscr{A}_1 \mathscr{A}_2 \e^{i t}
 - 6 \mathscr{A}_1^{2} \mathscr{A}_2 \e^{i t}
 - 2 \mathscr{A}_2^{2} \e^{-3 i t}
 - \mathscr{A}_1^{2} \e^{3 i t}
\right)+ \mathcal{O}(\varepsilon^3).
\end{split}
\end{equation}
In both \eqref{ex1:AA} and \eqref{ex1:yy}, the two equations are transformed to each other by the interchange 
${\mathscr A}_1 \leftrightarrow {\mathscr A}_2$ 
with $\I \rightarrow -\I$.

The equation \eqref{ex1:eq} for complex $y_1$ and $y_2$ can be consistently restricted to the case $y_2 = y_1^*$.
It is then natural to introduce the renormalized magnitude $R$ and phase $\theta$ by setting
$\mathscr{A}_1 = R \e^{\I \theta}$ and $\mathscr{A}_2 = R \e^{-\I \theta}$.
The RG equation \eqref{ex1:AA} can then be rewritten as follows:
\begin{equation}\label{rteq}
\begin{split}
\frac{d R}{dt} 
&=\frac{1}{3} \varepsilon^{4} R^{4} \sin \theta
+ \frac{1}{180} \varepsilon^{6} R^{4} \left(97 + 175 R^{2}\right) \sin \theta +\mathcal{O}(\varepsilon^8),
\\
\frac{d\theta}{dt}
&=-\frac{1}{3} \varepsilon^{2} \left(1 + 3 R^{2}\right)
+ \frac{1}{54} \varepsilon^{4} \left(-2 - 57 R^{2} - 54 R^{4} + 36 R^{3} \cos \theta \right) 
\\
&+ \frac{\varepsilon^{6}}{9720} \left(
-80 - 7023 R^{2} - 32913 R^{4} - 21870 R^{6}
+ 11610 R^{3} \cos \theta + 28890 R^{5} \cos \theta
\right) +\mathcal{O}(\varepsilon^8).
\end{split}
\end{equation}
See Appendix \ref{app:n} for representative plots obtained by numerical integration of these equations.
\end{example}

\begin{example}(Coupled oscillators) \label{ex:mar23}
Consider a system of $n$-coupled oscillators governed by the Hamilton equations of motion
\begin{align}\label{co}
\dot{q}_j = p_j, \qquad 
\dot{p}_j = -m_j^2 q_j + \varepsilon V_j(\varepsilon, \e^{\pm \I t}, q_1, \ldots, q_n)
\qquad (1 \le j \le n),
\end{align}
where $V_j(\varepsilon, \e^{\pm \I t}, q_1,\ldots, q_n)$ is a polynomial in 
$\varepsilon$ and in the canonical variables $q_1, \ldots, q_n$.
It may also be a (real) Laurent polynomial in $\e^{\I t}$, corresponding to a periodic external forcing.
We assume a commensurate situation such that $m_1,\ldots, m_n \in \Z_{\ge 1}$.
Then, switching to the variables $y_1,\ldots, y_{2n}$ defined by 
$y_{2j-1} = p_j + \I m_j q_j$ and $y_{2j} = p_j - \I m_j q_j$,
the system \eqref{co} is transformed into the form \eqref{yeq} with $n$ replaced by $2n$
and $M$ taking the form $\mathrm{diag}(m_1,-m_1,\ldots, m_n, -m_n)$.

As a concrete example, we consider the coupled nonlinear oscillators 
\begin{equation}\label{q12}
\ddot{q}_1 + q_1 + 4 \varepsilon q_2 \dot{q}_1 = 0,
\qquad
\ddot{q}_2 + q_2 + 4 \varepsilon q_1 \dot{q}_2 = 0.
\end{equation}
Applying the above formulation, the system \eqref{q12} is brought to the form of \eqref{yeq} as
\begin{subequations}
\begin{align}
&\frac{d}{dt}\begin{pmatrix}y_1 \\ y_2 \\ y_3 \\ y_4 \end{pmatrix}
= \I \begin{pmatrix}1 & 0 & 0 & 0 \\ 0 & -1 & 0 & 0 \\ 0 & 0 & 1 & 0 \\ 0 & 0 & 0 & -1 \end{pmatrix}
\begin{pmatrix}y_1 \\ y_2 \\ y_3 \\ y_4 \end{pmatrix} + 
\varepsilon \begin{pmatrix} V_1 \\ V_2 \\  V_3 \\ V_4 \end{pmatrix},
\\
&
V_1 = V_2  = \I (y_1+y_2) (y_3-y_4),\qquad
V_3 = V_4  = \I (y_1-y_2) (y_3+y_4) .
\end{align}
\end{subequations}
The renormalized amplitudes \eqref{ra3} are given by 
\begin{equation}
\mathscr{A} = (\mathscr{A}_1,\mathscr{A}_2,\mathscr{A}_3,\mathscr{A}_4)
= (P_{1,1}(\varepsilon,t,A), P_{2,-1}(\varepsilon,t,A), P_{3,1}(\varepsilon,t,A), P_{4,-1}(\varepsilon,t,A)),
\end{equation}
where $A=(A_1,A_2,A_3,A_4)$ denotes the bare amplitudes.
According to \eqref{shokou}, 
naive perturbation starting from 
$(y^{(0)}_1,y^{(0)}_2,y^{(0)}_3,y^{(0)}_4) 
=(A_1\e^{\I t}, A_2 \e^{-\I t}, A_3\e^{\I t}, A_4\e^{-\I t})$
yields
\begin{align*}
P_{1,1}(\varepsilon,t,A) &=A_1
- \frac{2}{3} \mathrm{i} A_1 \left(-4 A_2 A_3 + 3 A_1 A_4 + 2 A_3 A_4\right) t \varepsilon^2
+\mathcal{O}(\varepsilon^4),
\\
P_{2,-1}(\varepsilon,t,A) 
&=\left. P_{1,1}(\varepsilon,t,A) \right|_{A_1\leftrightarrow A_2, A_3\leftrightarrow A_4, \I \rightarrow -\I},
\\
P_{3,1}(\varepsilon,t,A) 
&=\left. P_{1,1}(\varepsilon,t,A) \right|_{A_1\leftrightarrow A_3, A_2\leftrightarrow A_4, h_1 \leftrightarrow h_2},
\\
P_{4,-1}(\varepsilon,t,A) 
&=\left. P_{1,1}(\varepsilon,t,A) \right|_{A_1\leftrightarrow A_4, 
A_2\leftrightarrow A_3, h_1 \leftrightarrow h_2,\I \rightarrow -\I}.
\end{align*}
The RG equation \eqref{rg3} reads
\begin{equation}\label{rg3.5}
\frac{d}{dt} (\mathscr{A}_1,\mathscr{A}_2,\mathscr{A}_3,\mathscr{A}_4)
= \left. \frac{\partial }{\partial s}
(P_{1,1}(\varepsilon,s,\mathscr{A}), 
P_{2,-1}(\varepsilon,s,\mathscr{A}), 
P_{3,1}(\varepsilon,s,\mathscr{A}), 
P_{4,-1}(\varepsilon,s,\mathscr{A}))\right|_{s=0}.
\end{equation}
Introduce the magnitude and phases of the renormalized amplitudes as
\begin{align}\label{ART}
{\mathscr A}_1 = R_1\e^{\I \theta_1}, \quad
{\mathscr A}_2 = R_1\e^{-\I \theta_1}, \quad
{\mathscr A}_3 = R_2\e^{\I \theta_2}, \quad
{\mathscr A}_4 = R_2\e^{-\I \theta_2}.
\end{align}
Then the RG equation \eqref{rg3.5} becomes 
\begin{subequations}
\begin{align}
\frac{dR_1}{dt}
&= \frac{14}{3} \varepsilon^{2} R_1^{2} R_2 \sin \theta_{12}
\nonumber \\
&- \frac{1}{27} \varepsilon^{4} R_1 R_2 \Bigl[
(70 R_1^{3} + 274 R_1 R_2^{2}) \sin \theta_{12}
+ (43 R_1^{2} R_2 + 72 R_2^{3}) \sin(2\theta_{12})
\Bigr]+ \mathcal{O}(\varepsilon^6),
\label{Rt1a}
\\[1ex]
\frac{d\theta_1}{dt}
&= -\frac{2}{3} \varepsilon^{2} R_2 \bigl(2 R_2 - R_1 \cos \theta_{12}\bigr)
\nonumber\\
&+ \frac{1}{27} \varepsilon^{4} R_2 \Bigl[
(-50 R_1^{2} R_2 - 52 R_2^{3})
+ (2 R_1^{3} + 90 R_1 R_2^{2}) \cos \theta_{12}
+ (65 R_1^{2} R_2 - 72 R_2^{3}) \cos(2\theta_{12})
\Bigr]+ \mathcal{O}(\varepsilon^6).
\label{Rt1b}
\end{align}
\end{subequations}
The equation for $dR_2/dt$ and $d\theta_2/dt$ are obtained 
by interchanging $(R_1, \theta_1)$ and $(R_2, \theta_2)$ in these equations.

Substituting \eqref{ART} into $Y_j$ in \eqref{rs3}, we obtain the renormalized expansion 
of $q_1 = (Y_1-Y_2)/(2\I)$ as follows:
\begin{equation}
\begin{split}
q_1
&= R_1 \sin(t+\theta_1)
+ \frac{2}{3} \varepsilon R_1 R_2 \Bigl(
3 \sin(\theta_1-\theta_2)
+ \sin(2t+\theta_1+\theta_2)
\Bigr)
\\
&+ \frac{1}{6} \varepsilon^{2} R_1 R_2 \Bigl[
4 R_2 \sin(t+\theta_1)
+ 6 R_1 \sin(t+2\theta_1-\theta_2)
- 8 R_1 \sin(t+\theta_2)
\\
&
\qquad \qquad \qquad + R_1 \sin(3t+2\theta_1+\theta_2)
+ 2 R_2 \sin(3t+\theta_1+2\theta_2)
\Bigr]
+ \mathcal{O}(\varepsilon^4).
\end{split}
\end{equation}
The result for $q_2$ follows form this 
by interchanging $(R_1, \theta_1)$ and $(R_2, \theta_2)$.
\end{example}

\medskip

The system \eqref{eq1}  also accommodates other interesting classes of models,
including certain Lotka--Volterra type equations under suitable conditions on the coefficients.

\section{First-Order ODE System with a Nilpotent Coefficient Matrix}\label{sec:5}

We now consider the first-order ODE (\ref{eq1}) 
for the $n$-dimensional vector $y$, 
with the matrix $\I M$ taken to be a single Jordan block with 
eigenvalue $\I m$ with $m \in \Z$.
That is, we consider the system
\begin{subequations}\label{eq2}
\begin{align}
\frac{dy}{dt} &= \I M y + \varepsilon V(\varepsilon,\e^{\pm \I t},y),
\label{yeq4}\\
\I M= & \I m\, \mathrm{Id} + \Lambda,
\qquad
\Lambda = 
\begin{pmatrix}
 0 & \!1 \\
  & \ddots & \ddots \\
  &        & 0 & 1 \\
   &        &   & 0
\end{pmatrix}.
\label{M4}
\end{align}
Here, $\mathrm{Id}$ and $\Lambda$ are $n$-dimensional matrices
given by $\mathrm{Id} = (\delta_{i,j})_{1 \le i,j \le n}$, and 
$\Lambda=(\delta_{i+1,j})_{1 \le i,j \le n}$.
As in \eqref{eq1}, we assume that 
$V_j(\varepsilon,\e^{\pm \I t}, y) \in 
\C[\varepsilon, \e^{\I t}, \e^{-\I t}, y_1, \ldots, y_n]$ for each $1 \le j \le n$.
Introducing the transformation $y = \e^{\I mt}\tilde{y}$ reduces (\ref{yeq4}) to 
$d\tilde{y}/dt =  \Lambda \tilde{y} + \varepsilon
\e^{-\I m t}V(\varepsilon,\e^{\pm \I t},\e^{\I m t}{\tilde y})$.
From $m \in \Z$,  we see that 
$\e^{-\I m t}V_j(\varepsilon,\e^{\pm \I t},\e^{\I m t}{\tilde y})$ remains 
a polynomial in $\varepsilon, {\tilde y}_1,\ldots, {\tilde y}_n$ and a Laurent polynomial
in $\e^{\I t}$.
Hence, without loss of generality, we may set $m=0$ 
and study the simplified system
\begin{align}\label{yeq42}
\frac{dy}{dt} &= \Lambda y + \varepsilon V(\varepsilon,\e^{\pm \I t},y),
\end{align}
\end{subequations}
where  $y= y(\varepsilon,t)=(y_1(\varepsilon,t),\ldots, y_n(\varepsilon,t))$ and 
$V(\varepsilon,\e^{\pm \I t},y) =
(V_1(\varepsilon,\e^{\pm \I t},y),\ldots, V_n(\varepsilon,\e^{\pm \I t},y))$ 
with the components 
$V_j(\varepsilon,\e^{\pm \I t},y) \in \C[\varepsilon, \e^{\I t}, \e^{-\I t}, y_1, \ldots, y_n]$.
We do not assume $V(\varepsilon, \e^{\pm \I t}, y=0)=0$.
The title of this section refers to the fact that the matrix $\Lambda$ (\ref{M4}) 
appearing in 
 (\ref{yeq42}) is nilpotent, in contrast to the semisimple 
 case considered previously in (\ref{ymf}).
A further change of variables $y \to \hat y = \e^{-\Lambda t} y$ will not be used,
because $\e^{\pm\Lambda t}$ is a matrix with polynomial entries, and therefore
$\e^{-\Lambda t} V(\varepsilon, \e^{\pm \I t}, \e^{\Lambda t} \hat y)$ does not
satisfy the condition stated above.

\subsection{Naive perturbation}
Substituting the series expansion (\ref{eexp}) in $\varepsilon$ into (\ref{yeq42}), 
we obtain a sequence of equations at each order of $\varepsilon$:
\begin{subequations}
\begin{align}
\frac{dy^{(0)}}{dt} &= \Lambda y^{(0)},
\label{yeq02}
\\
\frac{dy^{(1)}}{dt} &= \Lambda y^{(1)}+ V(0,\e^{\pm \I t}, y^{(0)}),
\nonumber \\
 & \vdots
\nonumber\\
\frac{dy^{(k)}}{dt} &= \Lambda y^{(k)}+ \Biggl[
V(\varepsilon, \e^{\pm \I t}, \sum_{0 \le j <  k}\varepsilon^j y^{(j)})
\Biggr]_{\varepsilon^{k-1}},
\label{yeqk2}
\end{align}
\end{subequations}
where the symbol $[X ]_{\alpha}$ has been defined after (\ref{yeqk}).
The general solution to the first equation (\ref{yeq02}) is given by 
$y^{(0)} = \e^{\Lambda t}A$ for a constant vector $A$, i.e., 
\begin{subequations}\label{yg0}
 \begin{align}\label{shokou2}
 \begin{pmatrix}y_1^{(0)}(t) \\ y_2^{(0)}(t) \\ \vdots  \\ y_n^{(0)}(t)  \end{pmatrix}
 &=\begin{pmatrix}
 g(t,A)   \\  \frac{dg(t,A)}{dt} \\ \vdots  \\  \frac{d^{n-1}g(t,A)}{dt^{n-1}} \end{pmatrix},
 \\
 g(t,A) &= \sum_{k=1}^{n}\frac{A_kt^{k-1}}{(k-1)!},
 \label{gdef}
 \end{align}
 \end{subequations}
 where $A=(A_1,\ldots, A_n)$ is an $n$-tuple of arbitrary parameters.
 For instance for $n=3$, we have
 \begin{align}\label{ye3}
 y^{(0)}(t) = \begin{pmatrix}
 A_1+ A_2t + \frac{A_3t^2}{2}\\
 A_2+A_3 t\\
 A_3
 \end{pmatrix}.
 \end{align}
Higher-order terms with respect to $\varepsilon$ are uniquely determined by 
imposing the condition
 \begin{align}\label{yjk4}
\left.  \left[ y^{(k)}_j(\varepsilon,t)\right]_{\e^{\I0 t}}\right|_{t=0}=0 
\quad (1 \le j \le n, \, k \ge 1),
\end{align}
In the notation of (\ref{yf3}), we have 
$\left[ y^{(k)}_j(\varepsilon,t)\right]_{\e^{\I0 t}} = f^{(k)}_{j,0}(t)$.
Thus, condition (\ref{yjk4})  implies that the general solution to the 
homogeneous equation at each order $k \ge 1$ must be fixed so that
$f^{(k)}_{j,0}(t=0)=0$ 
for all $1 \le j \le n$.

Let $Y(\varepsilon, t, A)=(Y_1(\varepsilon, t, A), \ldots, Y_n(\varepsilon, t, A))$ 
be the resulting solution, and 
define the associated {\em secular coefficients} $P_{j,m}(\varepsilon,t,A)$ 
in the same manner as (\ref{Yv3}):
 \begin{align}\label{Yv4}
 \begin{pmatrix}Y_1(\varepsilon,t,A) \\ \vdots \\ Y_n(\varepsilon,t,A) \end{pmatrix}
 = \sum_{m \in \Z}
 \begin{pmatrix}P_{1,m}(\varepsilon, t,A) \\ \vdots \\ P_{n,m}(\varepsilon,t,A) \end{pmatrix}
 \e^{\I mt}.
 \end{align}
The special case $P_{j,0}(\varepsilon,t,A)$ will be referred to as 
{\em resonant secular coefficients}.
By definition, $P_{j,m}(\varepsilon,t,A)$ satisfies 
\begin{align}\label{pa4}
P_{j,m}(\varepsilon,t,A) = \frac{d^{j-1}g(t,A)}{dt^{j-1}} \delta_{m,0} + \od(\varepsilon t^{\delta_{m,0}})
\quad 
\text{as a formal power series in $\varepsilon$ and $t$}.
\end{align}
From (\ref{gdef}), the special case $m=0$ and $t=0$ in this relation yields
\begin{align}\label{pa42}
P_{j,0}(\varepsilon,0,A) = \left. \frac{d^{j-1}g(t,A)}{dt^{j-1}}\right|_{t=0} = A_j.
\end{align}
We will use the property \eqref{ped3}, which remains valid in the present setting.

\subsection{Functional equation for secular coefficients}

Now we state an analogue of Lemma \ref{le:ren3} in the nilpotent setting.

\begin{lemma}\label{le:ren4}
Let $s$ be an arbitrary parameter and $A=(A_1,\ldots, A_n)$.
Then the formal power series $y=y(\varepsilon,t)$ of the form (\ref{yf3}) 
that satisfies the conditions (i), (ii), and (iii) below is unique, 
and coincides with $Y(\varepsilon,t,A)$ as defined in (\ref{Yv4}).
\begin{align*}
&(\I)\;  y(\varepsilon,t) \, \text{satisfies the differential equation (\ref{yeq42})},
\\
&(\I\I)\;  y(0,t)=\bigl(g(t,A),  \frac{dg(t,A)}{dt}, \ldots, \frac{d^{n-1}g(t,A)}{dt^{n-1}} \bigr),
\\
&(\I\I\I)\;  
[y_j(\varepsilon,t)]_{\e^{\I 0 t}}|_{t=s} = P_{j,0}(\varepsilon,s,A)\quad   (1 \le j \le n).
\end{align*}
\end{lemma}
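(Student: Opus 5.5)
We prove the two assertions separately: first that $Y(\varepsilon,t,A)$ satisfies (i)--(iii), then that any $y$ of the form (\ref{yf3}) satisfying (i)--(iii) is determined order by order in $\varepsilon$.

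\emph{Existence.} Condition (i) holds by construction of $Y$. Condition (ii) is (\ref{shokou2}), since $Y^{(0)}=\e^{\Lambda t}A$. Condition (iii) is the definition (\ref{Yv4}) of $P_{j,0}$ evaluated at $t=s$.

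\emph{Uniqueness.} Let $y$ and $\tilde y$ both be of the form (\ref{yf3}) and satisfy (i)--(iii). By (ii), $y^{(0)}=\tilde y^{(0)}$. Assume inductively that $y^{(j)}=\tilde y^{(j)}$ for all $j<k$, with $k\ge 1$. By (\ref{yeqk2}), the inhomogeneous terms at order $\varepsilon^k$ coincide. Hence $z=y^{(k)}-\tilde y^{(k)}$ solves $dz/dt=\Lambda z$ and has the form $\sum_m f_m(t)\e^{\I mt}$ with polynomial coefficients $f_m$.

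Comparing coefficients of $\e^{\I mt}$ gives $f_m'+\I m f_m=\Lambda f_m$. For $m\neq 0$ the operator $\I m-\Lambda$ is invertible on each degree-graded piece, since $\Lambda$ is nilpotent. Looking at the top-degree term in $t$ then forces $f_m=0$. For $m=0$ we get $f_0=\e^{\Lambda t}c$ for some constant vector $c$.

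Now take the $\varepsilon^k$ coefficient of (iii) for both solutions and subtract. This gives $f_0(s)=\e^{\Lambda s}c=0$. Since $\e^{\Lambda s}$ is unipotent, hence invertible, we conclude $c=0$. Thus $y^{(k)}=\tilde y^{(k)}$, completing the induction.

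The expected obstacle is that (iii) is imposed at an arbitrary $s$ rather than at $t=0$ as in (\ref{yjk4}). In the nilpotent case the homogeneous solution mixes components through powers of $t$, so we cannot simply read off constants componentwise. The invertibility of $\e^{\Lambda s}$ handles this: all $n$ conditions of (iii) must be used jointly to kill the vector $c$. Since $P_{j,0}$ is a formal series in $\varepsilon$ with polynomial coefficients in $s$, matching coefficients of $\varepsilon^k$ in (iii) is legitimate for symbolic $s$ as well.
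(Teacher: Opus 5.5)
Your proof is correct and takes essentially the same approach as the paper. The paper gives no separate argument: it presents this lemma, like Lemma \ref{le:ren3}, as an immediate consequence of the order-by-order construction. Your induction makes that construction explicit. At order $\varepsilon^k$ the only freedom is $\e^{\Lambda t}c$ in the zero harmonic, and the $n$ conditions in (iii) force $c=0$ because $\e^{\Lambda s}$ is invertible.
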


Conditions (ii) and (iii) in Lemma \ref{le:ren4} 
differ slightly from those in Lemma \ref{le:ren3}. 
The following Lemma is an analogue of Lemma \ref{le:ren2}, 
and can be proved in the same manner.

\begin{lemma}\label{le:ren5}
Let $s$ be an arbitrary parameter, and let $B=(B_1,\ldots, B_n)$.
Then the formal power series
\begin{align}
\sum_{m\in \Z}
 \begin{pmatrix}P_{1,m}(\varepsilon, t-s,B) \\ \vdots \\ P_{n,m}(\varepsilon,t-s,B) \end{pmatrix}
 \e^{\I mt}
 \end{align}
 is also a solution to the equation (\ref{yeq4}).
\end{lemma}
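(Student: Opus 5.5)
We follow the proof of Lemma~\ref{le:ren2}, now applied to the nilpotent system. Note that the statement says ``solution to (\ref{yeq4})''; since we have reduced to $m=0$, the relevant equation is (\ref{yeq42}). The key point is again that shifting $t \to t-s$ only inside the coefficients $P_{j,m}$, while keeping the factor $\e^{\I mt}$ unchanged, still gives a solution.

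First, substitute the naive solution $Y=\sum_{m\in\Z}P_m(\varepsilon,t,A)\e^{\I mt}$ from (\ref{Yv4}) into (\ref{yeq42}). Treating both sides as formal Laurent series in $\e^{\I t}$ whose coefficients are formal power series in $\varepsilon$ (polynomial in $t$ at each order), we extract the coefficient of $\e^{\I mt}$. This gives the infinite system
\begin{align*}
\frac{\partial P_m(\varepsilon,t,A)}{\partial t} + \I m P_m(\varepsilon,t,A)
= \Lambda P_m(\varepsilon,t,A)
+ \left[\varepsilon V\bigl(\varepsilon,\e^{\pm\I t},\sum_{l\in\Z}P_l(\varepsilon,t,A)\e^{\I lt}\bigr)\right]_{\e^{\I mt}}
\quad (m\in\Z).
\end{align*}
By (\ref{ped3}), which still holds in this setting, the sums over $l$ hidden in the last term are finite at each fixed order in $\varepsilon$. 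So the system is well defined as a system of formal power series in $\varepsilon$.

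Second, we check that this system is autonomous in $t$. The matrix $\Lambda$ is constant. Each $V_j$ is a polynomial in $y$ and a Laurent polynomial in $\e^{\I t}$, so extracting the $\e^{\I mt}$ coefficient of $V$ evaluated at $\sum_l P_l\e^{\I lt}$ gives a polynomial in finitely many $P_l$ at each order in $\varepsilon$, with constant coefficients. No explicit $t$ survives. Here it matters that we did \emph{not} pass to $\hat y=\e^{-\Lambda t}y$: that change of variables would bring polynomial-in-$t$ factors into the right-hand side and destroy autonomy. Keeping $\Lambda$ in the linear part avoids this.

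Third, since the system is autonomous, $t\mapsto P_m(\varepsilon,t-s,B)$ solves it for any $s$ and any $B$. The system was derived only from (\ref{yeq42}), not from the normalization (\ref{yjk4}), so it holds identically in $A$, and we may rename $A$ as $B$. Reading the coefficient identities backwards, the series $\sum_m P_m(\varepsilon,t-s,B)\e^{\I mt}$ satisfies (\ref{yeq42}).

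The step needing the most care is making the shift $t\to t-s$ legitimate at the formal level. At each order $\varepsilon^k$, each $P_{j,m}$ is a polynomial in $t$, and only finitely many $m$ contribute, by (\ref{ped3}). Hence the substitution $t\to t-s$ is an honest polynomial substitution, order by order. It commutes with $\partial_t$ and with the finite products coming from $V$, and this is what validates the argument.
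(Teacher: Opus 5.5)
Your proposal is correct and takes the same route as the paper, which proves this lemma ``in the same manner'' as Lemma~\ref{le:ren2}. In both, you extract the $\e^{\I mt}$-coefficients to get an autonomous infinite system for the $P_m$ (with $\Lambda$ replacing $\I M$), then use invariance under $t\mapsto t-s$ and the freedom to replace $A$ by an arbitrary $B$. Your reading of (\ref{yeq4}) as (\ref{yeq42}), after reducing the eigenvalue to zero, is also the intended one.
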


We refer to $n$-tuple $A=(A_1,\ldots, A_n)$  
as the {\em bare amplitudes}, and define the {\em renormalized amplitudes} by 
\begin{align}\label{ra4}
\mathscr{A}(\varepsilon,t,A)&=(\mathscr{A}_1(\varepsilon,t,A),\ldots,
\mathscr{A}_n(\varepsilon,t,A)),
\qquad
\mathscr{A}_j(\varepsilon,t,A) := P_{j,0}(\varepsilon,t,A).
\end{align}
The main theorem of this subsection is the following, which 
formally takes the identical form as Theorem \ref{th:yy3}.

\begin{theorem}\label{th:yy4}
For any $s,t,$ and  $A=(A_1,\ldots, A_n)$, 
the following equality holds:
\begin{align}
\sum_{m \in Z}P_{j,m}(\varepsilon, t,A)\e^{\I mt}
= \sum_{m \in \Z} 
P_{j,m}(\varepsilon,t-s,\mathscr{A}(\varepsilon,s,A))\e^{\I mt}
\quad (1 \le j \le n).
\label{yy4}
\end{align}
\end{theorem}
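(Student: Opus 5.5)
The plan is to copy the proof of Theorem \ref{th:yy3}. I will show that the RHS of (\ref{yy4}) satisfies conditions (i), (ii) and (iii) of Lemma \ref{le:ren4}. The uniqueness part of that lemma then identifies it with $Y(\varepsilon,t,A)$, i.e.\ with the LHS.

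First I check that the RHS is a formal series of the form (\ref{yf3}). By (\ref{ped3}) only finitely many $m$ contribute at each order in $\varepsilon$. Each $[P_{j,m}]_{\varepsilon^k}$ is a polynomial in $t$ and in the amplitudes. Also $\mathscr{A}(\varepsilon,s,A)=\mathscr{A}(0,s,A)+\od(\varepsilon)$, so substituting it for the amplitude argument gives, at each order in $\varepsilon$, a polynomial in $t$ and $s$. Condition (i) then follows directly from Lemma \ref{le:ren5} with $B=\mathscr{A}(\varepsilon,s,A)$.

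Condition (iii) is verified as before. Extracting the $\e^{\I 0 t}$ coefficient and setting $t=s$ gives $P_{j,0}(\varepsilon,0,\mathscr{A}(\varepsilon,s,A))$. By (\ref{pa42}) this equals $\mathscr{A}_j(\varepsilon,s,A)$, which is $P_{j,0}(\varepsilon,s,A)$ by (\ref{ra4}).

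The genuinely new step, and the expected obstacle, is condition (ii). In the semisimple case the leading amplitude was constant in $t$; here it is the polynomial $g$. Setting $\varepsilon=0$ and using (\ref{pa4}), the RHS of (\ref{yy4}) reduces to
\begin{align*}
\frac{d^{j-1}g}{dt^{j-1}}\bigl(t-s,\mathscr{A}(0,s,A)\bigr),
\end{align*}
with all $m\neq 0$ terms vanishing. Again by (\ref{pa4}),
\begin{align*}
\mathscr{A}_k(0,s,A)=P_{k,0}(0,s,A)=\frac{d^{k-1}g(s,A)}{ds^{k-1}} .
\end{align*}
Hence, by (\ref{gdef}),
\begin{align*}
g\bigl(t-s,\mathscr{A}(0,s,A)\bigr)=\sum_{k=1}^{n}\frac{(t-s)^{k-1}}{(k-1)!}\,\frac{d^{k-1}g(s,A)}{ds^{k-1}} .
\end{align*}
This is exactly the Taylor expansion of $g(t,A)$ about $t=s$, and it is exact because $g(\cdot,A)$ is a polynomial of degree at most $n-1$. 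So $g(t-s,\mathscr{A}(0,s,A))=g(t,A)$ identically. Differentiating $j-1$ times in $t$ gives the $j$-th component, which establishes (ii). This finite Taylor identity is the one computation that requires care; everything else is formal and parallels Theorem \ref{th:yy3}.
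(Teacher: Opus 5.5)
Your proposal is correct and follows the paper's proof essentially verbatim. Like the paper, you verify conditions (i)--(iii) of Lemma~\ref{le:ren4} for the right-hand side and invoke uniqueness, and the only new computation, condition (ii), reduces in both to the exact finite Taylor expansion of the polynomial $g(\cdot,A)$ of degree at most $n-1$ about $s$.
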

\begin{proof}
It suffices to verify that the RHS of (\ref{yy4}) satisfies the conditions (i), (ii) and (iii)
in Lemma \ref{le:ren4}.
Condition (i) follows from Lemma \ref{le:ren5}.
For the $j$th component,  conditions (ii) is verified as follows:
\begin{equation*}
\begin{split}
&\sum_{m \in \Z} 
P_{j,m}(0,t-s,\mathscr{A}(0,s,A))\e^{\I mt}
\\
&\overset{(\ref{pa4})}{=}
\frac{d^{j-1}g(t-s,\mathscr{A}(0,s,A))}{d(t-s)^{j-1}}
=\left. \frac{d^{j-1}g(t,\mathscr{A}(0,s,A))}{dt^{j-1}}\right|_{t\rightarrow t-s}
\overset{(\ref{gdef})}{=} \left. 
\frac{d^{j-1}}{dt^{j-1}}\sum_{k=1}^n\frac{\mathscr{A}_k(0,s,A)t^{k-1}}{(k-1)!}\right|_{t\rightarrow t-s}
\\
&\overset{(\ref{ra4})}{=}
\frac{d^{j-1}}{dt^{j-1}}\sum_{k=1}^n \left.\frac{P_{k,0}(0,s,A)t^{k-1}}{(k-1)!}\right|_{t\rightarrow t-s}
\overset{(\ref{pa4})}{=}
\frac{d^{j-1}}{dt^{j-1}}\sum_{k=1}^n 
\left.\frac{t^{k-1}}{(k-1)!}\frac{d^{k-1}g(s,A)}{ds^{k-1}}\right|_{t\rightarrow t-s}
\\
&= \left. \frac{d^{j-1}g(t+s,A)}{dt^{j-1}}\right|_{t\rightarrow t-s} 
= \frac{d^{j-1}g(t,A)}{dt^{j-1}}.
\end{split}
\end{equation*}
In the first equality of the last line, we have used the fact that $g(t, A)$, as defined in (\ref{gdef}), 
is a polynomial in $t$ of degree at most $n - 1$.
Condition (iii) is shown as follows:
\begin{equation*}
\left. \left[ \sum_{m \in \Z} 
P_{j,m}(\varepsilon,t-s,\mathscr{A}(\varepsilon,s,A))\e^{\I mt}
\right]_{\e^{\I 0 t}}\right|_{t=s}
= P_{j,0}(\varepsilon,0,\mathscr{A}(\varepsilon,s,A))
\overset{(\ref{pa42})}{=}\mathscr{A}_j(\varepsilon,s,A)
\overset{(\ref{ra4})}{=}P_{j,0}(\varepsilon,s,A).
\end{equation*}
\end{proof}

As in the semisimple case, Theorem~\ref{th:yy4} yields a number of results
concerning the renormalized amplitudes, as follows.
First, we have the functional equation for the secular coefficients:
\begin{align}\label{pp32}
P_{j,m}(\varepsilon, t,A)
= P_{j,m}(\varepsilon,t-s,\mathscr{A}(\varepsilon,s,A))
\qquad (1 \le j \le n, \,m \in \Z).
\end{align}
This leads to the renormalized expansion,  which has a form identical to (\ref{rs3}).

Second, by the definition (\ref{ra4}) of $\mathscr{A}(\varepsilon,t,A)$, 
this includes the closed functional equation for the renormalized amplitudes:
\begin{align}\label{AA32}
\mathscr{A}_j(\varepsilon,t,A)
= \mathscr{A}_j(\varepsilon,t-s,\mathscr{A}(\varepsilon,s,A))
\qquad (1 \le j \le n).
\end{align}

Third, the RG equation for renormalized amplitudes takes the form
\begin{align}\label{rg32}
\frac{d}{dt}\mathscr{A}_j(\varepsilon,t,A) = 
\left. \frac{\partial}{\partial s}P_{j,0}(\varepsilon,s,\mathscr{A}(\varepsilon,t,A))\right|_{s=0}
\quad (1 \le j \le n).
\end{align}

Fourth, the inversion relation for (\ref{ra4}),
which expresses the bare amplitude $A=(A_1,\ldots, A_n)$ as a function 
of the renormalized amplitude $\mathscr{A}=\mathscr{A}(\varepsilon,t,A)$, 
is given by 
\begin{align}\label{inv4}
A_j = P_{j,0}(\varepsilon,-t,\mathscr{A})
\qquad (1 \le j \le n).
\end{align}
The relations \eqref{AA32}--\eqref{inv4} 
formally correspond to setting $m_j = 0$ in the analogous expressions 
\eqref{AA3}, \eqref{rg3} and \eqref{inv3} obtained in the semisimple case.

Suppose that the secular coefficients are obtained by a naive perturbation as
\begin{align}
P_{j,m}(\varepsilon,A) = \frac{d^{j-1}g(t,A)}{dt^{j-1}} \delta_{m,0}
+ \sum_{k\ge 1,l, r_1,\ldots, r_n \ge 0}
C_{j,m;k,l,r_1, \ldots, r_n}
\varepsilon^k t^lA_1^{r_1}\cdots A_n^{r_n},
\end{align}
where the coefficient $C_{j,m;k,l,r_1, \ldots, r_n}$ vanishes when 
$(m,l)=(0,0)$, reflecting \eqref{pa4}.
The RG equation \eqref{rg32} then takes the form
\begin{align}\label{rg5}
\frac{d\mathscr{A}_j}{dt}=  \mathscr{A}_{j+1}  + \sum_{k\ge 1,  r_1,\ldots, r_n \ge 0}
C_{j,0;k,1,r_1, \ldots, r_n}
\varepsilon^k \mathscr{A}_1^{r_1}\cdots \mathscr{A}_n^{r_n}
\qquad (1 \le j \le n),
\end{align}
where $\mathscr{A}_j = \mathscr{A}_j(\varepsilon,t,A)$ for $1 \le j \le n$ and 
$\mathscr{A}_{n+1}=0$.

In contrast to the semisimple case \eqref{rg4}, the dynamics in \eqref{rg5}
is {\em not} slow in general due to the presence of the first term on the RHS.
In the nilpotent case, all eigenvalues of the linear part are zero and
the Jordan blocks are nontrivial. 
Consequently there is no slow--fast separation, 
and the RG vector field emerges at order $\mathcal{O}(\varepsilon^0)$.

When $V$ is autonomous, 
i.e. of the form $(V_j(\varepsilon, y))_{j=1}^n$, the secular coefficients $P_{j,m}(\varepsilon, t, A)$ are
vanishing for $m \neq 0$.
Therefore the relation \eqref{ypa} holds also in the nilpotent setting, and 
the RG equation reduces to the original system
$d\mathscr{A}_j/dt=\mathscr{A}_{j+1}+ \varepsilon\,V_j(\varepsilon,\mathscr{A})$
with $\mathscr{A}_{n+1}=0$.

\medskip
In both the semisimple and nilpotent cases,
the secular coefficients $P_{j,m}(\varepsilon,t,A)$ are formal power series
in $\varepsilon$. For each perturbation order and each harmonic
$\e^{\I mt}$, they give rise to a polynomial in $t$. Thus,
the appearance of polynomial $t$-dependence is not specific to the
nilpotent case.
The distinction lies in the linear part.
In the semisimple setting, the unperturbed solution has the form
$y^{(0)}(t)=\mathrm{e}^{\mathrm{i} M t}A$ in \eqref{shokou},
so the factors $\mathrm{e}^{\mathrm{i} m_j t}$ act as {\em carriers} while the
amplitudes $A_j$ (replaced under renormalization by $\mathscr{A}_j(t)$)
serve as center variables.
In the nilpotent setting, the unperturbed solution
(see \eqref{yg0}) has the polynomial form
$y^{(0)}(t)=\sum_{k} A_{k+1} t^k/k!$,
so the polynomial factors $t^k$ act as carriers while the coefficients
$A_k$ (replaced under renormalization by $\mathscr{A}_k(t)$)
constitute the center variables.
The functional RG construction extracts the dynamics on these center variables in both cases.

Let us further remark on the special case of \eqref{yeq4} in which 
$V_1(\varepsilon, \e^{\pm \I t}, y) = \cdots 
= V_{n-1}(\varepsilon, \e^{\pm \I t}, y)=0$.
Then, since $dy_j/dt = y_{j+1}$ for $1 \le j < n$, the system reduces to a
single equation for $y_1$:
\begin{equation}\label{vnr}
\frac{d^ny_1}{dt^n}
= \varepsilon V_n\Bigl(
 \varepsilon, \e^{\pm \I t},
 y_1, \frac{dy_1}{dt},\ldots, \frac{d^{n-1}y_1}{dt^{n-1}}
\Bigr).
\end{equation}
The renormalized amplitudes \eqref{ra4} satisfy the corresponding relations
$d\mathscr{A}_j/dt = \mathscr{A}_{j+1}$ for $1 \le j < n$.
Consequently, the RG equation \eqref{rg32} reduces to the single equation
\begin{align}\label{rgeqn}
\frac{d^n\mathscr{A}_1}{dt^n}
= \left.
   \frac{\partial^n}{\partial s^n}
   P_{0}\!\left(
     \varepsilon, s,
     \mathscr{A}_1,
     \frac{d\mathscr{A}_1}{dt}, \ldots, \frac{d^{n-1}\mathscr{A}_1}{dt^{n-1}}
   \right)
   \right|_{s=0}
\end{align}
for $\mathscr{A}_1=\mathscr{A}_1(\varepsilon,t,A)$.
This agrees with the RG equation \eqref{rgeq2} in the general framework of Section \ref{sec:2}
with $r=d=1$, $m_1=0$ and $n_1=n=N$, where \eqref{deq0} coincides with
\eqref{vnr}.

\begin{example}[Bogdanov--Takens type system with periodic forcing]
Consider
\begin{equation}\label{eq:BT}
\frac{d}{dt}
\begin{pmatrix}
y_1\\
y_2
\end{pmatrix}
=
\begin{pmatrix}
y_2\\
0
\end{pmatrix}
+\varepsilon
\begin{pmatrix}
2\alpha y_1 \cos t\\
\beta y_2(\mu + y_1^2 + 2\cos t)
\end{pmatrix},
\end{equation}
where $\alpha,\beta,\mu \in \mathbb{R}$.
This is a special case of \eqref{yeq42} with $n=2$, in which a nonlinear and time-periodic perturbation
is applied to the unperturbed $(\varepsilon=0)$ system whose equilibrium at the origin
is of Bogdanov--Takens type with a nilpotent linear part
(see, e.g., \cite[Sec.~8.7]{Kuz}).

From \eqref{ra4}, we have $\mathcal{A}_j = P_{j,0}$, where  
\begin{align*}
P_{1,0} &= 
A_1 + A_2 t
+ \frac{\varepsilon \beta A_2 t^2}{12}\left(
6A_1^2 + 6\mu
+ 4 A_1 A_2 t
+ A_2^2 t^2
\right)+\mathcal{O}(\varepsilon^2),
\\
P_{2,0} &= 
A_2
+ \frac{\varepsilon \beta A_2 t}{3}\left(
3A_1^2 + 3\mu
+ 3A_1 A_2 t
+  A_2^2 t^2
\right)+\mathcal{O}(\varepsilon^2).
\end{align*}
Applying the nilpotent RG scheme, we find that the RG equation \eqref{rg32} 
for $(\mathscr{A}_1,\mathscr{A}_2)$ takes the form
\begin{align}
\frac{d\mathscr{A}_1}{dt}
&=
\mathscr{A}_2\left(
1
+ 2 \alpha(\alpha - \beta)\varepsilon^2
- 8 \alpha \mathscr{A}_1 \mathscr{A}_2 (3\alpha - \beta)\beta \varepsilon^3
\right)
+ O(\varepsilon^4),
\\
\frac{d\mathscr{A}_2}{dt}
&=
\beta \mathscr{A}_2\left(
\varepsilon (\mathscr{A}_1^2 + \mu)
+ 2 \left(
\alpha^2 \mathscr{A}_1^2
+ \alpha^2 \mathscr{A}_2^2
+ 4 \alpha \mathscr{A}_2^2 \beta
- \mathscr{A}_2^2 \beta^2
\right)\varepsilon^3
\right)+ O(\varepsilon^4).
\end{align}
Thus we obtain a genuinely two-dimensional RG system.
\end{example}

\section{Scalar $N$th-Order ODE}\label{sec:2}

We consider a single $N$th-order ordinary differential equation for $y = y(t)$ of the form
\begin{align}\label{deq0}
  \Bigl(\frac{d}{dt} - \I m_1\Bigr)^{\!n_1}\cdots \Bigl(\frac{d}{dt} - \I m_d\Bigr)^{\!n_d} y
  = \varepsilon\, V\Bigl(\varepsilon, \e^{\pm \I t}, y, \frac{dy}{dt}, \ldots, \frac{d^{N-1}y}{dt^{N-1}}\Bigr),
\end{align}
where $n_1, \ldots, n_d$ are positive integers such that $N = n_1 + \cdots + n_d$ for some $d \ge 1$, 
and $m_1, \ldots, m_d$ are \emph{distinct} integers.
The equation (\ref{deq0}) contains a parameter $\varepsilon$ 
with respect to which a perturbation series is to be constructed.
The function $V$ is assumed to be a polynomial in the variables  
$\varepsilon$, $y$, $\frac{dy}{dt}$, $\ldots$, $\frac{d^{N-1}y}{dt^{N-1}}$, and a Laurent polynomial in $\e^{\I t}$.  

\subsection{Naive perturbation}\label{sec:nap}
We set
$y = y(\varepsilon, t) = \sum_{k \ge 0} \varepsilon^k y_k(t)$,
and seek a solution of the form 
\begin{align}\label{ysol1}
y(\varepsilon,t) = \sum_{m \in \Z}  \sum_{k\in \Z_{\ge 0}} \varepsilon^k f_{m,k}(t) \e^{\I m t},
\qquad f_{m,k}(t) \;\; \text{a polynomial in $t$}.
\end{align}
This is a formal power series in $\varepsilon$, and also a formal Laurent series in $\e^{\I t}$, where 
the order-$\varepsilon^k$ term corresponds to 
$y_k(t) = \sum_{m \in \Z} f_{m,k}(t) \e^{\I m t}$.
The equation for each order of $\varepsilon$ reads
\begin{subequations}
\begin{align}
\Bigl(\frac{d}{dt} - \I m_1\Bigr)^{\!n_1}\cdots \Bigl(\frac{d}{dt} - \I m_d\Bigr)^{\!n_d}y_0 &= 0,
\label{yeq00} \\
\Bigl(\frac{d}{dt} - \I m_1\Bigr)^{\!n_1}\cdots \Bigl(\frac{d}{dt} - \I m_d\Bigr)^{\!n_d} y_1 
&= V\Bigl(0, \e^{\pm \I t}, y_0, \frac{dy_0}{dt}, \ldots, \frac{d^{N-1}y_0}{dt^{N-1}} \Bigr), \nonumber\\
& \vdots \nonumber \\
\Bigl(\frac{d}{dt} - \I m_1\Bigr)^{\!n_1}\cdots \Bigl(\frac{d}{dt} - \I m_d\Bigr)^{\!n_d}y_k 
&= \left[ V\Bigl(\varepsilon, \e^{\pm \I t}, 
\sum_{j=0}^{k-1} \varepsilon^j y_j,
\sum_{j=0}^{k-1} \varepsilon^j \frac{dy_j}{dt}, \ldots,
\sum_{j=0}^{k-1} \varepsilon^j \frac{d^{N-1}y_j}{dt^{N-1}}
\Bigr) \right]_{\varepsilon^{k-1}}.
\label{yeqn2}
\end{align}
\end{subequations}
where the notation $[X ]_{\alpha}$ is defined after \eqref{yeqk}.
A general solution to \eqref{yeq00} is given by
\begin{align}\label{yA1}
y_0(t) &= \sum_{r=1}^d \sum_{j=1}^{\!n_r} \frac{A_{r,j}t^{j-1}}{(j-1)!} \e^{\I m_r t}
\end{align}
in terms of arbitrary parameters $A_{r,j}$ with $1 \le r \le d$, $1 \le j \le n_r$.
For later convenience, we group them and use the following notation:
\begin{align}
 &\mathcal{I} = \{(r,j)\mid 1 \le r \le d, 1 \le j \le n_r\},
\label{Idef}\\
&A =(A_{r,j})_{(r,j) \in \mathcal{I}} = (A_1,\ldots, A_d), \quad A_r = (A_{r,1},\ldots, A_{r,n_r})
\quad (1 \le r \le d),
\label{Adef}\\
&y_0(t) = y_0(t,A) = \sum_{r=1}^d h_r(t,A_r) \e^{\I m_r t},
\quad 
h_r(t, A_r) = \sum_{j=1}^{n_r} \frac{A_{r,j}t^{j-1}}{(j-1)!}\quad (1 \le r \le d).
\label{hrdef}
\end{align}

Starting from $y_0(t,A)$ in (\ref{hrdef}), one can construct $y_k(t)$ successively for $k=1,2,\ldots$ 
by adding a special solution to the full inhomogeneous equation (\ref{yeqn2}) 
and a general solution to the homogeneous equation (i.e., the same equation without $V$).
It is {\em uniquely} determined by imposing the condition:
\begin{align}\label{cond1}
\left. \frac{\partial^{j-1}}{\partial t^{j-1}}[y_k(t)]_{\e^{\I m_r t}} \right|_{t=0} = 0 \qquad ((r,j) \in \mathcal{I}).
\end{align}
In the notation 
$y_k(t) = \sum_{m \in \Z}f_{m,k}(t)\e^{\I m t}$ from (\ref{ysol1}), we have 
$[y_k(t)]_{\e^{\I m_r t}}= f_{m_r,k}(t)$.
The condition (\ref{cond1}) then implies that the $|\mathcal{I}|$ 
parameters in the general solution for each $y_k(t)\, (k \ge 1)$ should be chosen so that 
the  polynomial $f_{m_r,k}(t)$ is divisible by $t^{n_r}$ for all $1 \le r \le d$.

For $A=(A_{r,j})_{(r,j)\in\mathcal{I}}$,  
let $Y(\varepsilon,t,A)$ denote the unique formal solution constructed above, 
and define $P_m(\varepsilon,t,A)$ as the coefficient in its harmonic expansion:
\begin{align}\label{yp0}
Y(\varepsilon,t,A)
 = \sum_{m\in\Z} P_m(\varepsilon,t,A)\,\e^{\I m t}.
\end{align}
We refer to each $P_m(\varepsilon, t, A)$ as a \emph{secular coefficient}.  
It is a formal power series in $\varepsilon$ and $t$, and corresponds to 
$\sum_{k \ge 0}\varepsilon^k f_{m,k}(t)$ in the  notation of (\ref{ysol1}).
By definition, the secular coefficients have the following behavior:
\begin{equation}\label{co1}
P_m(\varepsilon, t, A) 
= \sum_{r=1}^d\bigl(h_r(t,A_r)\delta_{m, m_r} + \mathcal{O}(\varepsilon t^{n_r\delta_{m, m_r}})\bigr).
\end{equation}
The case $m = m_r$ in $P_m(\varepsilon, t, A)$ is referred to as the 
\emph{resonant secular coefficient}, and plays an important role.  
For instance, \eqref{co1} implies
\begin{align}
&P_m(0, t, A) = h_1(t,A_1)\delta_{m, m_1} + \cdots + h_d(t,A_d)\delta_{m, m_d},
\label{pe1}\\
&\left. \frac{\partial^{j-1}}{\partial t^{j-1}}P_{m_r}(\varepsilon, t, A)\right|_{t=0} = A_{r,j}\qquad ((r,j) \in \mathcal{I}).
\label{pt1}
\end{align}

From the successive construction of $y_0, y_1, \ldots$ and the assumption that $V$ is a Laurent polynomial in $\e^{\I t}$,  
it follows that the secular coefficients satisfy the following property:
\begin{align}\label{ped1}
P_m(\varepsilon, t, A) = \mathcal{O}(\varepsilon^{d_m}),  
\quad d_m \to \infty \quad \text{as} \quad |m| \to \infty.
\end{align}

\subsection{Functional equation for secular coefficients}\label{sec:fun}

The following lemma is a direct consequence of the preceding arguments.
\begin{lemma}\label{lem:con1}
Let s be an arbitrary parameter. 
The formal power series $y(\varepsilon,t)$ of the form (\ref{ysol1}) that satisfies the conditions (i), (ii), and (iii) below 
is unique, and coincides with $Y(\varepsilon,t,A)$ defined in (\ref{yp0}).
\begin{align*}
(\I)&\; y(\varepsilon,t)\ \text{satisfies (\ref{deq0})}, \\
(\I\I)&\; y(0,t)=y_0(t,A) \;\text{in \,(\ref{hrdef})} ,\\
(\I\I\I)&\; \frac{\partial^{j-1}([y(\varepsilon,t)]_{\e^{\I m_r t}})}{\partial t^{j-1}}\Bigr|_{t=s}
=\frac{\partial^{j-1}P_{m_r}(\varepsilon,s,A)}{\partial s^{j-1}}
\qquad ((r,j) \in \mathcal{I}).
\end{align*}
\end{lemma}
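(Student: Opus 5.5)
Existence is immediate: $Y(\varepsilon,t,A)$ satisfies (i) by construction, (ii) because its $\varepsilon^0$ part is $y_0(t,A)$ in \eqref{hrdef}, and (iii) trivially, since $[Y]_{\e^{\I m_r t}}=P_{m_r}(\varepsilon,t,A)$ by \eqref{yp0}. The work is uniqueness. I would proceed order by order in $\varepsilon$, as in the semisimple and nilpotent cases. Take any $y$ of the form \eqref{ysol1} satisfying (i)--(iii) and write $y=\sum_k\varepsilon^k y_k(t)$. Condition (ii) fixes $y_0=y_0(t,A)$, which agrees with the $\varepsilon^0$ part of $Y$.

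For the induction, suppose $y_j$ coincides with the $\varepsilon^j$ part of $Y$ for all $j<k$. Then the right-hand side of \eqref{yeqn2} is the same for $y$ and for $Y$, so the difference $w=y_k-[Y]_{\varepsilon^k}$ solves the homogeneous equation $\prod_r(\frac{d}{dt}-\I m_r)^{n_r}w=0$. This has to be justified within the class \eqref{ysol1}. I would write $w=\sum_m f_m(t)\e^{\I m t}$ with polynomial $f_m$. The operator acts harmonic by harmonic: on the $m$-th harmonic it is $\prod_r(\frac{d}{dt}+\I(m-m_r))^{n_r}f_m$. If $m\notin\{m_r\}$, this operator is invertible on polynomials, since it is a nonzero constant plus a nilpotent part, so $f_m=0$. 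If $m=m_r$, it equals $\frac{d^{n_r}}{dt^{n_r}}$ composed with an invertible operator, so $f_{m_r}$ is a polynomial of degree less than $n_r$. Hence
\[
w=\sum_{r=1}^d\sum_{j=1}^{n_r}\frac{c_{r,j}t^{j-1}}{(j-1)!}\e^{\I m_r t}.
\]

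The final step uses (iii) to force every $c_{r,j}=0$. Taking the $\varepsilon^k$ coefficient of (iii) for $y$ and for $Y$ and subtracting gives $\partial_t^{j-1}[w]_{\e^{\I m_r t}}|_{t=s}=0$ for $1\le j\le n_r$. This says the degree-$<n_r$ polynomial $h_r(t,c_r)$ has its first $n_r$ derivatives vanishing at $t=s$, so its Taylor expansion at $s$ is zero and $c_r=0$. I expect this to be the main point, because it is where a condition at an arbitrary $s$ replaces the $t=0$ normalization \eqref{cond1}. It works because the homogeneous kernel consists of polynomials of degree $<n_r$ in each resonant harmonic, which are determined by their $(n_r-1)$-jet at any point. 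The argument mirrors the proof of Theorem \ref{th:yy4} (the $g(t+s)$ Taylor step). Induction on $k$ then gives $y=Y$.
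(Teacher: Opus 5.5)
Your proof is correct and is essentially the argument the paper intends: the paper gives no separate proof, calling the lemma a direct consequence of the order-by-order construction in Section \ref{sec:nap}, where each $y_k$ is a particular solution plus an element of the $|\mathcal{I}|$-dimensional homogeneous kernel spanned by $t^{j-1}\e^{\I m_r t}$, with the free constants fixed by a jet condition. Your harmonic-by-harmonic computation of the kernel and your observation that a polynomial of degree $<n_r$ is determined by its $(n_r-1)$-jet at any point $s$ make explicit why moving the normalization from $t=0$ in \eqref{cond1} to $t=s$ in (iii) still gives uniqueness.
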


The following lemma is an analogue of Lemma \ref{le:ren2} and Lemma \ref{le:ren5}.

\begin{lemma}\label{lem:arb}
For arbitrary parameters s and $B=(B_{r,j})_{(r,j) \in \mathcal{I}}$, the formal power series
\begin{align}
\sum_{m\in\Z}P_{m}(\varepsilon,t-s,B)\e^{\I mt}
\end{align}
is also a solution to the differential equation (\ref{deq0}).
\end{lemma}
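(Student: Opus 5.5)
Following the proof of Lemma \ref{le:ren2}, I would substitute the harmonic expansion $Y=\sum_{m\in\Z}P_m(\varepsilon,t,A)\e^{\I mt}$ into (\ref{deq0}). I would then show that extracting the coefficient of $\e^{\I mt}$ yields an infinite system for the secular coefficients in which $t$ enters only through $\{P_l(\varepsilon,t,A)\}_{l\in\Z}$ and their $t$-derivatives. Once this system is shown to be autonomous, translating $t\mapsto t-s$ in every $P_m$ maps solutions to solutions. That gives the claim for the formal series $\sum_m P_m(\varepsilon,t-s,B)\e^{\I mt}$ with $\e^{\I mt}$ left unshifted. Since $B$ is arbitrary, I apply this with $A$ replaced by $B$.

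The key computation is the action of the differential operator on a single harmonic. First, $\frac{d}{dt}\bigl(P(t)\e^{\I mt}\bigr)=\bigl((\partial_t+\I m)P(t)\bigr)\e^{\I mt}$. Hence the left-hand side of (\ref{deq0}) contributes
\begin{align*}
\Bigl(\partial_t+\I(m-m_1)\Bigr)^{n_1}\cdots\Bigl(\partial_t+\I(m-m_d)\Bigr)^{n_d}P_m(\varepsilon,t,A)
\end{align*}
to the coefficient of $\e^{\I mt}$. Similarly, each argument $d^ky/dt^k$ of $V$ becomes $\sum_l\bigl((\partial_t+\I l)^kP_l\bigr)\e^{\I lt}$. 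Because $V$ is a polynomial in $y,\ldots,d^{N-1}y/dt^{N-1}$ and a Laurent polynomial in $\e^{\I t}$, its $\e^{\I mt}$-coefficient is a finite combination, at each order in $\varepsilon$, of products of the functions $(\partial_t+\I l)^kP_l$. Finiteness follows from (\ref{ped1}). The explicit $\e^{\pm\I t}$ dependence only shifts harmonic labels and introduces no explicit $t$.

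The resulting system therefore has the form
\begin{align*}
\mathcal{D}_mP_m=\Bigl[\varepsilon V\bigl(\varepsilon,\e^{\pm\I t},\textstyle\sum_l P_l\e^{\I lt},\ldots\bigr)\Bigr]_{\e^{\I mt}},
\end{align*}
where $\mathcal{D}_m$ is a constant-coefficient operator in $\partial_t$, and the right-hand side is built from the $P_l$ and their derivatives with constant coefficients. The system is thus invariant under $t\mapsto t-s$, because $\partial_t$ commutes with the shift. Reassembling the harmonics gives that $\sum_mP_m(\varepsilon,t-s,B)\e^{\I mt}$ satisfies (\ref{deq0}).

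The only step needing care, which I expect to be the main obstacle, is justifying the passage between the equation for $y$ and the coefficientwise system. I would treat everything as a formal Laurent series in an independent symbol $z=\e^{\I t}$, with coefficients polynomial in $t$ at each $\varepsilon$-order. The representation (\ref{ysol1}) is then unique, and $d/dt$ acts as $\partial_t+\I z\partial_z$. Working in this formal setting makes the extraction of $[\cdot]_{\e^{\I mt}}$ well defined and legitimizes the shift, which holds only in the $t$ appearing in the coefficients and not in $z$.
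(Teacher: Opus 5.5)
Your proposal is correct and takes essentially the same approach as the paper's proof. Like the paper, you extract the $\e^{\I mt}$-coefficients to obtain the system (\ref{peq1}), use (\ref{ped1}) for finiteness at each order in $\varepsilon$, and conclude from the absence of explicit $t$ that shifting $t\mapsto t-s$ in the coefficients (but not in $\e^{\I mt}$) preserves solutions; your bookkeeping with $z=\e^{\I t}$, $d/dt=\partial_t+\I z\partial_z$, and the derivative arguments $\bigl((\partial_t+\I l)^kP_l\bigr)\e^{\I lt}$ just states precisely what the paper writes more loosely.
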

\begin{proof}
The proof is similar to the one for Lemma \ref{le:ren2}.
The substitution of $A$ with arbitrary $B$ poses no issue, 
since the dependence on initial parameters is entirely formal. 
The essential point is that shifting $t$ to $t-s$ in $P_m$  
without changing $\e^{\I mt}$ into $\e^{\I m(t-s)}$ keeps it a solution of (\ref{deq0}). 
To show this, regard (\ref{yp0}) as a formal Laurent series in $\e^{\I t}$.
By substituting it into (\ref{deq0}) 
and taking the coefficient of $\e^{\I mt}$, we get an infinite system of equations among the secular coefficients:
\begin{equation}\label{peq1}
\begin{split}
&\Bigl(\frac{\partial}{\partial t}+\I m-\I m_1\Bigr)^{\!n_1}
\cdots 
\Bigl(\frac{\partial}{\partial t}+\I m-\I m_d\Bigr)^{\!n_d} P_m(\varepsilon,t,A)\\
&=\left[ \varepsilon V\left(\varepsilon,\e^{\pm\I t},\sum_{l\in\Z}P_l(\varepsilon,t,A)\e^{\I lt},
\ldots,\sum_{l\in\Z}\frac{t^{N-1}}{\partial t^{N-1}}P_l(\varepsilon,t,A)\e^{\I lt}\right) \right]_{\e^{\I mt}}
\end{split}
\end{equation}
In general, the RHS involves infinite sums. However, thanks to (\ref{ped1}), they are actually finite at each order 
of $\varepsilon$, and thus makes sense as a formal power series in $\varepsilon$.
The key observation here is that the system (\ref{peq1}) is fully autonomous. 
The variable $t$ appears only through $\{P_l(\varepsilon, t, A)\mid l \in \Z\}$,
regardless of whether the original equation (\ref{deq0}) is autonomous or not.
This is a direct consequence of the assumption that $V$ 
is a polynomial in the variables  
$\varepsilon$, $y$, $\frac{dy}{dt}$, $\ldots$, $\frac{d^{N-1}y}{dt^{N-1}}$, 
and a Laurent polynomial in $\e^{\I t}$.  
Therefore, replacing $t$ with $t-s$ in $P_m(\varepsilon,t,A)$ yields another formal power series solution to (\ref{deq0}),
as claimed.
\end{proof}

We refer to $A=(A_{r,j})_{(r,j) \in \mathcal{I}}$ as the {\em bare amplitudes},  
and define the {\em renormalized amplitudes} as follows:
\begin{equation}\label{Ardef}
\begin{split}
&\mathscr{A}(\varepsilon,t,A) 
= (\mathscr{A}_{r,j}(\varepsilon,t,A))_{(r,j) \in \mathcal{I}}
= (\mathscr{A}_1(\varepsilon,t,A), \ldots, \mathscr{A}_d(\varepsilon,t,A)), 
\\
&\mathscr{A}_r(\varepsilon,t,A) = 
(\mathscr{A}_{r,1}(\varepsilon,t,A), \ldots, \mathscr{A}_{r,n_r}(\varepsilon,t,A)), 
\quad  \mathscr{A}_{r,j}(\varepsilon,t,A):=
\frac{\partial^{j-1}P_{m_r}(\varepsilon,t,A)}{\partial t^{j-1}}.
\end{split}
\end{equation}
By definition, they satisfy the simple relation
\begin{align}\label{pe11}
\frac{\partial}{\partial t}\mathscr{A}_{r,j}(\varepsilon,t,A) = 
\mathscr{A}_{r,j+1}(\varepsilon,t,A)\quad (1 \le r \le d, 1 \le j \le n_r-1).
\end{align}
Therefore, the renormalized amplitudes are
determined from 
$\mathscr{A}_{r,1}(\varepsilon,t,A)\, (1 \le r \le d)$ as
\begin{align}\label{AA}
\mathscr{A}(\varepsilon,t,A) =\left(
\frac{\partial^{j-1}\mathscr{A}_{r,1}(\varepsilon,A)}{\partial t^{j-1}}\right)_{(r,j) \in \mathcal{I}}.
\end{align}
We refer to
$\mathscr{A}_{1,1}(\varepsilon,t,A), \ldots, \mathscr{A}_{d,1}(\varepsilon,t,A)$ 
as the  {\em basic renormalized amplitudes}.
Our forthcoming RG equation (\ref{rgeq2}) will be described 
in terms of them.

The main result of this section is the following theorem and its consequences.
\begin{theorem}\label{thm:main1}
For any $s,t$,and $A=(A_1,\ldots,A_d)$, the following equality holds:
\begin{align}\label{func}
\sum_{m\in\Z}P_m(\varepsilon,t,A)\e^{\I mt}
=\sum_{m\in\Z}P_m(\varepsilon,t-s,\mathscr{A}(\varepsilon,s,A))\e^{\I mt}. 
\end{align}
\end{theorem}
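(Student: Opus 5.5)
We follow the pattern of Theorems \ref{th:yy3} and \ref{th:yy4}. By the uniqueness in Lemma \ref{lem:con1}, it suffices to show that the right-hand side of (\ref{func}),
\begin{align*}
\tilde y(\varepsilon,t) := \sum_{m\in\Z}P_m(\varepsilon,t-s,\mathscr{A}(\varepsilon,s,A))\e^{\I mt},
\end{align*}
has the form (\ref{ysol1}) and satisfies conditions (i), (ii), (iii) of that lemma with the given $A$.

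\emph{Condition (i).} This is immediate from Lemma \ref{lem:arb} with $B=\mathscr{A}(\varepsilon,s,A)$. Since $B$ depends on $\varepsilon$ as a formal power series, $\tilde y$ is still a formal power series in $\varepsilon$ whose coefficients are polynomials in $t$ times harmonics. Property (\ref{ped1}) keeps the harmonic sums finite at each order.

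\emph{Condition (ii).} Set $\varepsilon=0$ and use (\ref{pe1}) to get
\begin{align*}
\tilde y(0,t)=\sum_{r=1}^d h_r\bigl(t-s,\mathscr{A}_r(0,s,A)\bigr)\e^{\I m_r t}.
\end{align*}
From (\ref{Ardef}) and (\ref{pe1}), $\mathscr{A}_{r,j}(0,s,A)=\partial_s^{j-1}h_r(s,A_r)$. Hence
\begin{align*}
h_r\bigl(t-s,\mathscr{A}_r(0,s,A)\bigr)=\sum_{j=1}^{n_r}\frac{(t-s)^{j-1}}{(j-1)!}\,\partial_s^{j-1}h_r(s,A_r).
\end{align*}
This is the Taylor expansion of $h_r(t,A_r)$ about $t=s$. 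It is exact because $h_r$ is a polynomial in $t$ of degree at most $n_r-1$. So $\tilde y(0,t)=y_0(t,A)$, exactly as in the nilpotent computation in the proof of Theorem \ref{th:yy4}.

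\emph{Condition (iii).} The coefficient of $\e^{\I m_r t}$ in $\tilde y$ is $P_{m_r}(\varepsilon,t-s,\mathscr{A}(\varepsilon,s,A))$. Differentiate it $j-1$ times in $t$ and set $t=s$. By (\ref{pt1}) applied with amplitude $\mathscr{A}(\varepsilon,s,A)$, this gives $\mathscr{A}_{r,j}(\varepsilon,s,A)$. By definition (\ref{Ardef}), this equals $\partial_s^{j-1}P_{m_r}(\varepsilon,s,A)$, which is what (iii) requires.

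\emph{Main obstacle.} The only delicate step is the uniqueness statement of Lemma \ref{lem:con1} itself, which we invoke here. Its content is as follows. Two solutions satisfying (i) and (ii) differ, order by order in $\varepsilon$, by a solution of the homogeneous equation, i.e. by $\sum_r q_r(t)\e^{\I m_r t}$ with $\deg q_r<n_r$. Condition (iii) fixes the Taylor data of $q_r$ at $t=s$ up to order $n_r-1$, and these data determine $q_r$, so the difference vanishes at each order.

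To apply this inductively we also need the following: when the amplitude argument is the $\varepsilon$-dependent $\mathscr{A}(\varepsilon,s,A)$, the order-$\varepsilon^k$ part of $\tilde y$ still solves the $k$-th perturbative equation (\ref{yeqn2}) with lower orders given by $\tilde y$. This follows from (i) holding as an identity of formal series. With uniqueness granted, the equality (\ref{func}) follows.
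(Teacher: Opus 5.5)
Your proposal is correct and matches the paper's proof step for step. Both verify conditions (i)--(iii) of Lemma \ref{lem:con1} for the right-hand side: (i) from Lemma \ref{lem:arb}, (ii) from the exact Taylor expansion of the polynomial $h_r$ (degree at most $n_r-1$) about $t=s$, and (iii) from (\ref{pt1}) with (\ref{Ardef}). Your sketch of the uniqueness in Lemma \ref{lem:con1} is also correct: it is an order-by-order induction in which a difference of the form $\sum_r q_r(t)\e^{\I m_r t}$ with $\deg q_r<n_r$ is forced to vanish by the Taylor data at $t=s$, which the paper leaves implicit as a consequence of the construction.
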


\begin{proof}
It suffices to verify that the RHS of (\ref{func})  satisfies the conditions (i), (ii), and (iii) in Lemma \ref{lem:con1}.
Condition (i) follows from Lemma \ref{lem:arb}.
Condition (ii) is shown as 
\begin{equation}
\begin{split}
&\sum_{m \in \Z} P_m\left(0,t-s, \mathscr{A}(0,s,A)\right)\e^{\I mt}
\overset{(\ref{pe1})}{=}\sum_{r=1}^dh_r(t-s,\mathscr{A}_r(0,s,A))\e^{\I m_r t}
\overset{(\ref{hrdef})}{=}\sum_{r=1}^d\e^{\I m_r t}\sum_{j=1}^{n_r} \frac{(t-s)^{j-1}}{(j-1)!}\mathscr{A}_{r,j}(0,s,A)
\nonumber\\
&\overset{(\ref{Ardef})}{=}\sum_{r=1}^d\e^{\I m_r t}
\sum_{j=1}^{n_r} \frac{(t-s)^{j-1}}{(j-1)!}\frac{\partial^{j-1}P_{m_r}(0,s,A)}{\partial s^{j-1}}
\overset{(\ref{pe1})}{=}\sum_{r=1}^d\e^{\I m_r t}
\sum_{j=1}^{n_r} \frac{(t-s)^{j-1}}{(j-1)!} \frac{\partial^{j-1}h_r(s,A_r)}{\partial s^{j-1}} 
=\sum_{r=1}^dh_r(t,A_r)\e^{\I m_r t} ,
\end{split}
\end{equation}
where the last step uses the fact that $h_r(s,A_r)$ in (\ref{hrdef}) is a polynomial in $s$ with degree at most $n_r-1$.
Condition (iii) is shown as
\begin{equation}
\begin{split}
&\frac{\partial^{j-1}}{\partial t^{j-1}}\left.\left(\Biggl[\sum_{m \in \Z}P_m(\varepsilon,t-s,\mathscr{A}(\varepsilon,s,A))
\e^{\I m t}\Biggr]_{\e^{\I m_r t}}\right)\right|_{t=s}
=
\left. \frac{\partial^{j-1}P_{m_r}(\varepsilon,t-s,\mathscr{A}(\varepsilon,s,A))}{\partial t^{j-1}}\right|_{t=s}
\nonumber\\
&\overset{(\ref{pt1})}{=}
\mathscr{A}_{r,j}(\varepsilon,s,A)
\overset{(\ref{Ardef})}{=}\frac{\partial^{j-1}P_{m_r}(\varepsilon,s,A)}{\partial s^{j-1}}.
\end{split}
\end{equation}
\end{proof}

\begin{corollary}\label{cor:fun}
The secular coefficients satisfy the following functional relation:
\begin{align}\label{PP1}
P_m(\varepsilon,t,A)
=P_m(\varepsilon,t-s,\mathscr{A}(\varepsilon,s,A))\quad (m \in \Z).
\end{align}
\end{corollary}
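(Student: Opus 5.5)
The plan is to read off Corollary \ref{cor:fun} from Theorem \ref{thm:main1} by comparing coefficients of $\e^{\I mt}$ on the two sides of (\ref{func}). The one point needing care is that this comparison is legitimate. The $P_m$ depend on $t$ polynomially at each order in $\varepsilon$, so "the coefficient of $\e^{\I mt}$" has to be understood in the right sense.

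First, I will fix the ambient space in which (\ref{func}) is an identity. Both sides are formal power series in $\varepsilon$. By (\ref{ped1}), and its analogue for the RHS with $A$ replaced by $\mathscr{A}(\varepsilon,s,A)$, each $\varepsilon^k$-coefficient is a finite sum $\sum_m f_{m,k}(t)\e^{\I mt}$ with $f_{m,k}$ polynomial in $t$. For the RHS I need to check that substituting $\mathscr{A}(\varepsilon,s,A)=A+\mathcal{O}(\varepsilon)$ into $P_m$ preserves this structure. Substitution is well defined order by order, because $P_m$ is polynomial in $A$ at each order in $\varepsilon$. Its $\varepsilon$-order remains at least $d_m$, since $\mathscr{A}$ carries no negative powers of $\varepsilon$. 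So at each order only finitely many $m$ contribute.

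Next comes the key step. I will use the linear independence of $\{t^l\e^{\I mt}\}_{l\ge0,\,m\in\Z}$ over $\C$, with $s$ and $A$ treated as parameters. This holds as functions, and equally in the formal setting where $\e^{\I t}$ is treated as an independent Laurent variable, which is how Lemma \ref{lem:arb} regards (\ref{yp0}). Hence a finite identity $\sum_m f_m(t)\e^{\I mt}=\sum_m g_m(t)\e^{\I mt}$ with polynomial coefficients forces $f_m=g_m$ for every $m$. Applying this at each order $\varepsilon^k$ to (\ref{func}) gives
\[
[P_m(\varepsilon,t,A)]_{\varepsilon^k}=[P_m(\varepsilon,t-s,\mathscr{A}(\varepsilon,s,A))]_{\varepsilon^k}
\]
for all $k$ and $m$, which is (\ref{PP1}).

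The main obstacle is only the bookkeeping in the first step. One must confirm that the RHS of (\ref{func}) really is a well-defined series of the form (\ref{ysol1}), with polynomial-in-$t$ coefficients and finitely many harmonics at each order, so that uniqueness of the harmonic decomposition applies. After that the corollary is immediate. The same argument also gives Corollary \ref{co:p=p3} and (\ref{pp32}) from Theorems \ref{th:yy3} and \ref{th:yy4}.
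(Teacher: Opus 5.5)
Your proposal is correct and follows the paper's route. The paper states Corollary \ref{cor:fun} without a separate proof, as an immediate consequence of Theorem \ref{thm:main1} obtained by extracting the coefficient of $\e^{\I mt}$ on both sides of \eqref{func}; your checks that the substitution $A\mapsto\mathscr{A}(\varepsilon,s,A)$ is well defined and that the family $t^l\e^{\I mt}$ is linearly independent spell out what the paper takes for granted by treating these expressions as formal Laurent series in $\e^{\I t}$.
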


In particular, setting $m=m_r$ and taking derivative $(\partial/\partial t)^{j-1}$ yields the 
functional relation for renormalized amplitudes defined in (\ref{Ardef}):
\begin{align}\label{fun2}
\mathscr{A}_{r,j}(\varepsilon,t,A)=\mathscr{A}_{r,j}(\varepsilon,t-s,\mathscr{A}(\varepsilon,s,A))
\quad ((r,j) \in \mathcal{I}).
\end{align}
By replacing $(t,s)$ with $(t+s,t)$ in (\ref{fun2}) and using (\ref{Ardef}), we obtain
\begin{align}
\mathscr{A}_{r,j}(\varepsilon,t+s,A)
=\frac{\partial^{j-1}}{\partial s^{j-1}}P_{m_r}(\varepsilon,s,\mathscr{A}(\varepsilon,t,A))
\quad ((r,j) \in \mathcal{I}).
\end{align}
Differentiating this identity with respect to $s$ at $s=0$ gives
\begin{align}\label{rg1}
\frac{d}{dt}\mathscr{A}_{r,j}(\varepsilon,t,A)
=\frac{\partial^{j-1}}{\partial s^{j-1}}P_{m_r}(\varepsilon,s,\mathscr{A}(\varepsilon,t,A))\Big|_{s=0}
\quad ((r,j) \in \mathcal{I}).
\end{align}
By (\ref{pt1}), the RHS equals $\mathscr{A}_{r,j+1}(\varepsilon,t,A)$ for $1 \le j \le n_r-1$. 
Therefore, (\ref{rg1}) is trivially satisfied in this range of $j$ due to  (\ref{pe11}).
The RG equation governing the dynamics of the renormalized amplitudes is obtained from 
the unique nontrivial case $j=n_r$:
\begin{align}\label{rgeq2}
\frac{d^{n_r}}{dt^{n_r}}\mathscr{A}_{r,1}(\varepsilon,t,A) 
= \left.\frac{\partial^{n_r}}{\partial s^{n_r}}
P_{m_r}(\varepsilon, s,\mathscr{A}(\varepsilon,t,A))\right|_{s=0}
\quad (1 \le r \le d).
\end{align}
In view of (\ref{AA}), this constitutes a system of $n_r$th-order autonomous ordinary differential equations 
for the basic renormalized amplitudes $\mathscr{A}_{r,1}(\varepsilon,t,A) \, (1 \le r \le d)$.

Suppose that the secular coefficients are obtained by naive perturbation as
\begin{align}
P_{m}(\varepsilon,t,A)
&= h_1(t,A_1)\delta_{m,m_1}
 + \cdots
 + h_d(t,A_d)\delta_{m,m_d} 
 + \sum_{k,l\ge 1,\, i_{r,j}\ge 0}
   D_{m;k,l,\{i_{r,j}\}}
   \varepsilon^k t^{\,l-1}
   \prod_{(r,j)\in\mathcal{I}}
   A_{r,j}^{\,i_{r,j}},
\end{align}
where the coefficients $D_{m_h;k,l,\{i_{r,j}\}}$ vanish for $(h,l)\in\mathcal{I}$, 
reflecting \eqref{co1}.
Then, from \eqref{pe11}, the RG equation \eqref{rgeq2} becomes
\begin{align}\label{rg7}
\frac{d^{n_r}\mathscr{A}_{r,1}}{dt^{n_r}}
&=
\sum_{k\ge 1,\, i_{r,j}\ge 0}
 \varepsilon^k\, n_r!\,
 D_{m_r;k,n_r+1,\{i_{r,j}\}}
 \prod_{(r,j)\in\mathcal{I}}
 \left(\frac{\partial^{j-1}\mathscr{A}_{r,1}}{\partial t^{j-1}}\right)^{\!i_{r,j}}
\qquad (1\le r\le d),
\end{align}
where $\mathscr{A}_{r,1}=\mathscr{A}_{r,1}(\varepsilon,t,A)$.
The RHS of (\ref{rg7})  is of order $\mathcal{O}(\varepsilon)$ which is also readily seen from (\ref{co1}).
In this sense, the dynamics (\ref{rg7}) is slow compared with the original 
time evolution in (\ref{yA1}).

By applying (\ref{PP1}) with $s=t$ to the naive expansion (\ref{yp0}), we obtain the 
renormalized expansion:
\begin{align}\label{rs2}
Y(\varepsilon,t,A) 
= \sum_{m \in \Z} 
P_{m}(\varepsilon,0,\mathscr{A}(\varepsilon,t,A))\e^{\I mt}.
\end{align}

One can invert the formula (\ref{Ardef})  for the renormalized amplitudes in terms of the bare ones as 
\begin{align}\label{inv1}
A_{r,j} \overset{(\ref{pt1})}{=}
\frac{\partial^{j-1}P_{m_r}(\varepsilon,t,A)}{\partial t^{j-1}}\Big|_{t=0} 
\overset{(\ref{PP1})}{=}
\frac{\partial^{j-1}P_{m_r}(\varepsilon,t-s,\mathscr{A}(\varepsilon,s,A))}{\partial t^{j-1}}\Bigr|_{t=0}.
\end{align}
Here, one may interchange $s$ and $t$ in the final expression, since $A_{r,j}$ is actually  independent of 
either variable.

\begin{example}\label{ex:1}
Although elementary, this example provides a convenient illustration
of the main ideas and a useful check of the results obtained in the main text.
Consider the single component case 
\begin{align}\label{ex1:0}
\frac{dy}{dt}=\varepsilon V(\varepsilon,y),
\end{align}
where $V$ is autonomous, namely, a polynomial in $y$ independent of $t$.
This corresponds to the case $N=d=n_1=1, m_1=0$ in (\ref{deq0}).
The expansion (\ref{yp0}) of the solution reduces to 
a single term $y=Y(\varepsilon,t,A) = P_0(\varepsilon,t,A)$.
It is formally determined by the relation
\begin{align}\label{ex1:2}
\int^{P_0(\varepsilon,t,A)}_A\frac{dz}{V(\varepsilon,z)} = \varepsilon t,
\end{align}
satisfying $P_0(0,t,A)= P_0(\varepsilon,0,A)=A$.
Replacing $t$ with $s$, or $(t,A)$ with $(t-s, P_0(\varepsilon, s, A))$ in (\ref{ex1:2}),  
we obtain two relations:
\begin{align}\label{ex1:3}
\int^{P_0(\varepsilon,s,A)}_A\frac{dz}{V(\varepsilon,z)} = \varepsilon s, \qquad
\int^{P_0(\varepsilon,t-s,P_0(\varepsilon, s, A))}_{P_0(\varepsilon, s, A)}\frac{dz}{V(\varepsilon,z)} = \varepsilon (t-s).
\end{align}
Comparing the sum of the two relations with (\ref{ex1:2}) 
leads to the functional relation (\ref{PP1}) for the unique secular coefficient:
$P_0(\varepsilon,t,A) = P_0(\varepsilon,t-s,P_0(\varepsilon,s,A))$. 
For example, when  $V(\varepsilon,y)=y^2-1$, this relation is satisfied by 
$P_0(\varepsilon,t,A) = (A \cosh \varepsilon t - \sinh \varepsilon t)/ (A \sinh \varepsilon t + \cosh \varepsilon t)$.

The unique renormalized amplitude (\ref{Ardef}) is simply given by 
$\mathscr{A}(t) := \mathscr{A}_{1,1}(\varepsilon,t,A) = P_0(\varepsilon,t,A)$.
The RG equation (\ref{rgeq2}) reads 
\begin{align}\label{ex1:4}
\frac{d}{dt}\mathscr{A}(t) = \frac{\partial}{\partial s}P_0(\varepsilon,s,\mathscr{A}(t))\Bigr|_{s=0}.
\end{align}
To compute the RHS, differentiate the first relation in (\ref{ex1:3}) with respect to $s$ to obtain
\begin{align}
\frac{\partial}{\partial s}P_0(\varepsilon,s,A) = \varepsilon  V(\varepsilon,P_0(\varepsilon,s,A)).
\end{align} 
Setting $(s,A)=(0,\mathscr{A}(t))$ and using $P_0(\varepsilon,0,A)=A$, 
we find that (\ref{ex1:4}) becomes
\begin{align}
\frac{d}{dt}\mathscr{A}(t) =  \varepsilon V(\varepsilon, \mathscr{A}(t)).
\end{align}
Thus, in this simple setting, the RG equation simply coincides with the original equation (\ref{ex1:0}) itself,
in agreement with the remark given around \eqref{ypa}.
\end{example}

\begin{example}\label{ex:3rd}
Let us consider a third-order nonlinear, non-autonomous example:
\begin{align}
\frac{d^3y}{dt^3} = 2\varepsilon y \frac{d^2y}{dt^2}\cos t.
\end{align}
This corresponds to \eqref{deq0} with $d=1$, $m_1=0$, and $N=n_1=3$.
We simply denote the set $\mathcal{I} = \{(1,1), (1,2), (1,3)\}$ in \eqref{Idef} by $\{1,2,3\}$.
Correspondingly, the bare amplitude \eqref{Adef} is written as $A=(A_1,A_2,A_3)$, and 
the function \eqref{hrdef} as $h(t,A) = A_1 + A_2 t + A_3 t^2/2$.
The renormalized amplitude \eqref{Ardef} is given by
$\mathscr{A}=(\mathscr{A}_1,\mathscr{A}_2,\mathscr{A}_3)
=\Bigl(\mathscr{A}_1, \frac{d\mathscr{A}_1}{dt},
\frac{d^2\mathscr{A}_1}{dt^2}\Bigr)$ with 
$\mathscr{A}_1=\mathscr{A}_1(\varepsilon,t,A)=P_0(\varepsilon,t,A)$.

A straightforward calculation leads to 
\begin{equation}\label{p0}
\begin{split}
P_0(\varepsilon,t, A)
&=h(t,A)
+ \frac{\varepsilon^2A_3  t^3}{120}
\Bigl(
  40 A_2 (A_1 - 3 A_3)
  + 10 (A_2^2 + A_1 A_3 - 3 A_3^2) t
  + 6 A_2 A_3 t^2
  + A_3^2 t^3
\Bigr) + \mathcal{O}(\varepsilon^4)
\\
 &= h(t,A)+ Q(\varepsilon, A) t^3 + \mathcal{O}(t^4)
\\
Q(\varepsilon,A) 
&=\frac{\varepsilon^2 A_2A_3}{3} (A_1 - 3 A_3) 
\\
&+ \frac{\varepsilon^4A_2 A_3}{192} 
\Bigl(
  -32 A_1^3
  - 32 A_1 A_2^2
  - 120 A_1^2 A_3
  + 924 A_2^2 A_3
  + 5576 A_1 A_3^2
  - 48165 A_3^3
\Bigr)+ \mathcal{O}(\varepsilon^6).
\end{split}
\end{equation}
The RG equation \eqref{rgeq2} for the basic renormalized amplitude $\mathscr{A}_1$  reads
\begin{equation}\label{a3eq}
\begin{split}
\frac{d^3\mathscr{A}_1}{dt^3} 
&= 6 Q\Bigl(\varepsilon, \mathscr{A}_1, \frac{d\mathscr{A}_1}{dt},
\frac{d^2\mathscr{A}_1}{dt^2}\Bigr)+ \mathcal{O}(\varepsilon^6).
\end{split}
\end{equation}
The renormalized expansion \eqref{rs2}, free of secular terms, reads
\begin{equation*}
\begin{split}
Y(\varepsilon,t,A) 
&= 
\mathscr{A}_1
- 2 \varepsilon \mathscr{A}_3 
\bigl(
  3 \mathscr{A}_2 \cos t
  + (\mathscr{A}_1 - 6 \mathscr{A}_3) \sin t
\bigr)
\\
&+ \frac{\varepsilon^2 \mathscr{A}_3}{32} 
\Bigl(
  (8 \mathscr{A}_1^2 - 36 \mathscr{A}_2^2 - 52 \mathscr{A}_1 \mathscr{A}_3 + 183 \mathscr{A}_3^2)\cos(2t)
  + 16 \mathscr{A}_2 (-2 \mathscr{A}_1 + 9 \mathscr{A}_3)\sin(2t)
\Bigr) + \mathcal{O}(\varepsilon^3),
\end{split}
\end{equation*}
where $\mathscr{A}_2 = d\mathscr{A}_1/d t$ and 
$\mathscr{A}_3 = d^2\mathscr{A}_1/d t^2$, and 
$\mathscr{A}_1$ satisfies \eqref{a3eq}.
\end{example}

\subsection{A difference equation}\label{ss:de}

As a miscellaneous example, we study a difference equation that may be
formally regarded as an $N=\infty$ case in which
$\{m_1,\ldots,m_N\}\rightarrow\Z$.
Consider \eqref{deq0} with $N=d=2K+1$, $n_1=\cdots=n_N=1$, and
$\{m_1,\ldots,m_N\}=\{0,\pm1,\ldots,\pm K\}$ for a positive integer $K$.
With a suitable $K$-dependent rescaling of $V$, the equation takes the form
\begin{equation}\label{ex21}
2\pi\frac{d}{dt}\prod_{n=1}^K
\left(1+\frac{1}{n^2}\frac{d^2}{dt^2}\right)y
=
\varepsilon V\!\left(
\varepsilon,\e^{\pm\I t},y,\frac{dy}{dt},\ldots,\frac{d^{2K}y}{dt^{2K}}
\right).
\end{equation}
Using the infinite product representation
$\sin z=z\prod_{n=1}^{\infty}\left(1-\frac{z^2}{n^2\pi^2}\right)$,
the formal limit $K\rightarrow\infty$ of the differential operator on the
LHS yields the difference operator
$2\sinh\!\left(\pi\frac{d}{dt}\right)$.
Consequently, one obtains the difference-differential equation
\begin{align}\label{ex22}
y(t+\pi)-y(t-\pi)
=
\varepsilon V\!\left(
\varepsilon,\e^{\pm\I t},y,\frac{dy}{dt},\ldots
\right).
\end{align}
In what follows, we consider a simple example in which $V$ is linear in $y$:
\begin{subequations}
\begin{align}
y(t+\pi)-y(t-\pi)
&=
2\varepsilon U(\e^{\I t})y(t),
\label{yyu}
\\
2U(z)
&=
\sum_{l}\alpha_l z^l,
\label{ua}
\end{align}
\end{subequations}
where the factor $2$ is introduced only to simplify subsequent formulas.
The sum runs over $l\in\Z$, and the coefficients $\alpha_l$ are arbitrary
except that $\alpha_l=0$ for all but finitely many $l\in\Z$.
Up to a slight modification $y(t)\rightarrow\e^{\I t/2}y(t)$, the equation
\eqref{yyu} appears in several contexts in physics and mathematics,
including the Harper equation in the Hofstadter problem \cite{Harper,Hof}
and the Baxter $TQ$ relation in quantum integrable systems (cf. \cite[eq.~(9.4.5)]{Baxter}).

In the present setting, the entire set $\{\e^{\I m t}\mid m\in\Z\}$
consists of resonant harmonics.
Accordingly, the unperturbed solution takes the form
$y_0(t)=\sum_{m\in\Z}A_m\e^{\I m t}$, where
$A=\{A_m\mid m\in\Z\}$ are arbitrary coefficients playing the role of
bare amplitudes.

The order-$k$ perturbation equation corresponding to \eqref{yeqn2} reads
\begin{equation}
y_k(t+\pi)-y_k(t-\pi) = 2U(\e^{\I t})y_{k-1}(t)\qquad (k\ge 1).
\end{equation}
Substituting 
$y_k(t) = \sum_{m \in \Z} f_{m,k}(t) \e^{\I m t}$ and comparing the 
coefficient of $\e^{\I mt}$, we obtain 
\begin{equation}\label{feq}
(-1)^m \bigl( f_{m,k}(t+ \pi) - f_{m,k}(t-\pi)\bigr)
=\sum_l \alpha_l f_{m-l,k-1}(t)  \quad (k \ge 1, m \in \Z),
\end{equation}
with the initial condition $f_{m,0}(t) = A_m$ with respect to $k$.
The condition \eqref{cond1} reads $f_{m,k}(0)=\delta_{k,0}A_m$ for all $m \in \Z$.
The polynomial $f_{m,k}(t)$ in $t$ that satisfies \eqref{feq} and these conditions is obtained in the form 
\begin{align}\label{fgb}
f_{m,k}(t) = (-1)^{mk}g_k\Bigl(\frac{t}{\pi}\Bigr)B_{m,k},
\end{align}
where $g_k(u)$ is a degree-$k$ polynomial in $u$ given by 
\begin{align}\label{gex}
g_k(u) = \frac{u\Gamma\bigl(\frac{u+k}{2}\bigr)}{2k! \Gamma\bigl(\frac{u-k}{2}+1\bigr)} 
= 1,\frac{u}{2},  \frac{u^2}{8},
\frac{u(u^2-1)}{48}, 
\frac{u^2(u^2-4)}{384},
\frac{u(u^2-1)(u^2-9)}{3840}, \ldots
\end{align}
for $k=0,1,2,3,4,5, \ldots$.
It is the solution to the recursion relation and the initial condition:
\begin{align}\label{grec}
g_k(u+1)-g_k(u-1) = g_{k-1}(u), \quad g_0(u)=1.
\end{align}
We will use the generating function and its consequences:
\begin{align}
&\sum_{k\ge 0}g_k(u)(2\zeta)^k = (\sqrt{1+\zeta^2}+\zeta)^u,
\qquad
\sum_{k\ge 0}(-1)^k \mathcal{N}_k\zeta^{2k+1} = \log(\sqrt{1+\zeta^2}+\zeta),
\label{gp1}\\
&g'_{2k}(0)=0, \quad
 g'_{2k+1}(0) = \frac{(-1)^k}{2^{2k+1}}\mathcal{N}_k,
\quad 
\mathcal{N}_k = \frac{1}{2^{2k}(2k+1)}\binom{2k}{k}
=1,\frac{1}{6}, \frac{3}{40}, \frac{5}{112}, \frac{35}{1152},\ldots
\label{gp2}
\end{align} 
for $k=0,1,2,3,4,\ldots$.

Returning to \eqref{fgb}, $B_{m,k}$ is a linear combination of the bare amplitudes $\{A_m\}$
independent of $t$.
Substituting \eqref{fgb} into \eqref{feq} and using \eqref{grec}, we have
\begin{align}\label{brec} 
B_{m,k} = \sum_l \alpha_l (-1)^{(k-1)l}B_{m-l,k-1},\qquad B_{m,0}=A_m,
\end{align}
where the latter initial condition follows from \eqref{fgb} and $f_{m,0}(t) = A_m$.
Setting 
\begin{align}\label{bca}
B_{m,k} = \sum_jC_{k,j}A_{m+j},
\end{align}
and comparing the coefficient of $A_m$ on both sides of \eqref{brec}, we obtain the recursion relation
\begin{align}
C_{k,j} = \sum_{i,l;\; i-l=j}(-1)^{(k-1)l}\alpha_l C_{k-1,i}.
\end{align}
This can be rewritten as a recursion relation for the generating function
\begin{align}\label{hrec}
h_k(z) &= \sum_j C_{k,j}z^j, \qquad 
h_k(z) = 2U\bigl((-1)^{k-1}z^{-1}\bigr) h_{k-1}(z), \qquad h_0(z) = 1.
\end{align} 
As a consequence, we obtain
\begin{align}\label{ch}
C_{k,j} &= \oint\frac{h_k(z)\,dz}{2\pi i\, z^{j+1}},
\qquad
\begin{cases}
\,h_{2k}(z) = \bigl(2U(z^{-1})U(-z^{-1})\bigr)^k,
\\
\,h_{2k+1}(z) = \bigl(2U(z^{-1})\bigr)^{k+1}\bigl(2U(-z^{-1})\bigr)^k.
\end{cases}
\end{align}

Combining the results so far, the resonant secular coefficient
$P_m(\varepsilon,t,A)=\sum_{k\ge 0}\varepsilon^k f_{m,k}(t)$ is expressed as
\begin{equation}\label{pgh}
\begin{split}
P_m(\varepsilon,t,A)
&=\sum_j A_{m+j}\sum_{k\ge 0}
\varepsilon^k(-1)^{mk}g_k\Bigl(\frac{t}{\pi}\Bigr)
\oint\frac{h_k(z)\,dz}{2\pi i\, z^{j+1}},
\\
&=A_m + \sum_jA_{m+j}\left(
\frac{\varepsilon(-1)^mt}{2\pi}\alpha_{-j}
+ \frac{\varepsilon^2 t^2}{8\pi^2}\sum_l (-1)^l\alpha_l\alpha_{-j-l}
+\mathcal{O}(\varepsilon^3)\right).
\end{split}
\end{equation}
Since the present setting formally corresponds to $n_r=1$ for all $r\in\Z$ in \eqref{deq0},
the renormalized amplitudes are simply the secular coefficients themselves.
Henceforth, we set $\mathscr{A}_m(t)=\mathscr{A}_m(\varepsilon,t,A)=P_m(\varepsilon,t,A)$.

In the rest of this subsection, we assume that $U(z)$ is an even Laurent polynomial,
namely, $U(z)=U(-z)$.
Then $h_k(z)=\bigl(2U(z^{-1})\bigr)^k$.
Using \eqref{gp1} and \eqref{pgh}, the renormalized amplitude is computed as
\begin{align}
\mathscr{A}_m(\varepsilon, t,A)
&=\sum_j A_{m+j}
\oint\frac{dz}{2\pi i\, z^{j+1}}
\Biggl(
\sum_{\substack{k \ge 0\\ \text{even}}}\varepsilon^k g_k\Bigl(\frac{t}{\pi}\Bigr)
\bigl(2U(z^{-1})\bigr)^k
+(-1)^m\sum_{\substack{k \ge 1\\ \text{odd}}}\varepsilon^k g_k\Bigl(\frac{t}{\pi}\Bigr)
\bigl(2U(z^{-1})\bigr)^k
\Biggr)
\nonumber\\
&=\sum_j A_{m+j}
\oint\frac{dz}{2\pi i\, z^{j+1}}
\left(
\cosh\Bigl(\frac{\Theta(\varepsilon,z^{-1})\, t}{\pi}\Bigr)
+(-1)^m\sinh\Bigl(\frac{\Theta(\varepsilon,z^{-1})\, t}{\pi}\Bigr)
\right)
\nonumber \\
&=\sum_j A_{m+j}
\oint\frac{dz}{2\pi i\, z^{j+1}}
\exp\Bigl(\frac{(-1)^m\Theta(\varepsilon,z^{-1})\, t}{\pi}\Bigr),
\label{csh}
\end{align}
where the function $\Theta(\varepsilon,\zeta)$ is defined by
\begin{align}\label{utt}
\Theta(\varepsilon,\zeta)
= \log\left(\sqrt{1+\varepsilon^2 U(\zeta)^2}+ \varepsilon U(\zeta)\right),
\qquad
\sinh \Theta(\varepsilon,\zeta) = \varepsilon U(\zeta).
\end{align}
An explicit power series expansion in $\varepsilon$ with vanishing initial term
$\Theta(0,\zeta)=0$ follows from the latter identity in \eqref{utt}.
Note also that $\Theta(\varepsilon,-\zeta)=\Theta(\varepsilon,\zeta)$
due to the assumption $U(z)=U(-z)$.

With the explicit formula \eqref{csh} at hand,  one can prove the functional relation (see \eqref{fun2})
\begin{align}\label{aa}
\mathscr{A}_m(\varepsilon,t, A)
= \mathscr{A}_m\!\left(\varepsilon,t-s,\{\mathscr{A}_j(\varepsilon,s, A)\}\right)
\end{align}
for $A=\{A_l\mid l \in \Z\}$.
In fact, the RHS of \eqref{aa} is calculated as
\begin{equation*}
\begin{split}
&\sum_j \mathscr{A}_{m+j}(\varepsilon,s,A)
\oint\frac{dz}{2\pi i\, z^{j+1}}
\exp\Bigl(\frac{(-1)^m\Theta(\varepsilon,z^{-1})\,(t-s)}{\pi}\Bigr)
\\
&= \sum_{j,l}A_{m+j+l}
\oint\oint\frac{dwdz}{(2\pi i)^2\, w^{l+1}z^{j+1}}
\left(
\cosh\Bigl(\frac{\Theta(\varepsilon,w^{-1})\, s}{\pi}\Bigr)
+(-1)^{m+j}\sinh\Bigl(\frac{\Theta(\varepsilon,w^{-1})\, s}{\pi}\Bigr)
\right)\\
&\qquad\qquad\qquad\qquad
\times \exp\Bigl(\frac{(-1)^m\Theta(\varepsilon,z^{-1})\,(t-s)}{\pi}\Bigr)
\\
&=\sum_n A_{m+n}\oint\oint\frac{dwdz}{(2\pi i)^2\, w^{n+1}z}
\sum_j
\left(\Bigl(\frac{w}{z}\Bigr)^j
\cosh\Bigl(\frac{\Theta(\varepsilon,w^{-1})\, s}{\pi}\Bigr)
+(-1)^{m}\Bigl(-\frac{w}{z}\Bigr)^j\sinh\Bigl(\frac{\Theta(\varepsilon,w^{-1})\, s}{\pi}\Bigr)
\right)\\
&\qquad\qquad\qquad\qquad
\times \exp\Bigl(\frac{(-1)^m\Theta(\varepsilon,z^{-1})\,(t-s)}{\pi}\Bigr).
\end{split}
\end{equation*}
The sums over $j$ yield delta functions and the subsequent integral 
$\oint dz/(2\pi i z)$ picks out the integrands at $z=\pm w$.
Using $\Theta(\varepsilon,-\zeta)=\Theta(\varepsilon,\zeta)$ further, 
the result is expressed as
\begin{equation}
\begin{split}
&\sum_n A_{m+n}
\oint\frac{dw}{2\pi i\, w^{n+1}}
\left(
\cosh\Bigl(\frac{\Theta(\varepsilon,w^{-1})\, s}{\pi}\Bigr)
+(-1)^{m}\sinh\Bigl(\frac{\Theta(\varepsilon,w^{-1})\, s}{\pi}\Bigr)
\right)
\\
&\qquad\qquad\qquad\qquad
\times \exp\Bigl(\frac{(-1)^m\Theta(\varepsilon,w^{-1})\,(t-s)}{\pi}\Bigr).
\end{split}
\end{equation}
Since the expression in parentheses equals
$\exp\!\bigl((-1)^m\Theta(\varepsilon,w^{-1})\,s/\pi\bigr)$,
this reproduces the LHS of \eqref{aa}, written in \eqref{csh}.

In view of \eqref{pgh}, the functional relation \eqref{aa} for
$\mathscr{A}_m(\varepsilon,t,A)=P_m(\varepsilon,t,A)$ yields an infinite family
of highly nontrivial identities among the coefficients of the $A_\ell$'s.
These identities can indeed be verified explicitly in low orders of
$\varepsilon$.

Let us introduce the generating series of the renormalized amplitudes \eqref{csh}:
\begin{align}
\mathscr{A}(\zeta,t)=\sum_{m}\mathscr{A}_m(\varepsilon,t,A)\zeta^m
&=
\sum_{m+j}A_{m+j}\zeta^{m+j}
\oint\frac{dz}{2\pi i z}\sum_j(\zeta z)^{-j}
\cosh\Bigl(\frac{\Theta(\varepsilon,z^{-1})\,t}{\pi}\Bigr)
\nonumber\\
&\quad+
\sum_{m+j}A_{m+j}(-\zeta)^{m+j}
\oint\frac{dz}{2\pi i z}\sum_j(-\zeta z)^{-j}
\sinh\Bigl(\frac{\Theta(\varepsilon,z^{-1})\,t}{\pi}\Bigr)
\nonumber\\
&=
\exp\Bigl(\frac{\Theta(\varepsilon,\zeta)\,t}{\pi}\Bigr)
\sum_{m:\text{even}}A_m\zeta^m
+
\exp\Bigl(-\frac{\Theta(\varepsilon,\zeta)\,t}{\pi}\Bigr)
\sum_{m:\text{odd}}A_m\zeta^m,
\label{csh2}
\end{align}
where in the last step the $j$-sums and the integrals
$\oint dz/(2\pi i z)$ pick out the integrands at $z=\pm\zeta^{-1}$.

Since $\mathscr{A}_m(\varepsilon,t,A)=P_m(\varepsilon,t,A)$ and 
$P_m(\varepsilon,0,A)=A_m$, 
the renormalized expansion
$Y(\varepsilon,t,A)=\sum_m\mathscr{A}_m(\varepsilon, t, A)\e^{\I mt}$ in \eqref{rs2}
is the formal series obtained by setting $\zeta=\e^{\I t}$ in
$\mathscr{A}(\zeta,t)$:
\begin{align}\label{fin}
Y(\varepsilon,t,A)
&=
\exp\Bigl(\frac{\Theta(\varepsilon,\e^{\I t})\,t}{\pi}\Bigr)
\sum_{m:\text{even}}A_m\e^{\I mt}
+
\exp\Bigl(-\frac{\Theta(\varepsilon,\e^{\I t})\,t}{\pi}\Bigr)
\sum_{m:\text{odd}}A_m\e^{\I mt}.
\end{align}
One can readily verify that \eqref{fin} satisfies the difference equation \eqref{yyu}:
\begin{equation}\label{yf}
\begin{split}
&Y(\varepsilon,t+\pi,A)-Y(\varepsilon,t-\pi,A)
\\
&= \mu_+(-\e^{\I t})
\exp\Bigl(\frac{\Theta(\varepsilon,-\e^{\I t})(t+\pi)}{\pi}\Bigr)
+\mu_-(-\e^{\I t})
\exp\Bigl(-\frac{\Theta(\varepsilon,-\e^{\I t})(t+\pi)}{\pi}\Bigr)
\\
&\quad-\mu_+(-\e^{\I t})
\exp\Bigl(\frac{\Theta(\varepsilon,-\e^{\I t})(t-\pi)}{\pi}\Bigr)
-\mu_-(-\e^{\I t})
\exp\Bigl(-\frac{\Theta(\varepsilon,-\e^{\I t})(t-\pi)}{\pi}\Bigr)
\\
&= \mu_+(\e^{\I t})
\exp\Bigl(\frac{\Theta(\varepsilon,\e^{\I t})(t+\pi)}{\pi}\Bigr)
-\mu_-(\e^{\I t})
\exp\Bigl(-\frac{\Theta(\varepsilon,\e^{\I t})(t+\pi)}{\pi}\Bigr)
\\
&\quad-\mu_+(\e^{\I t})
\exp\Bigl(\frac{\Theta(\varepsilon,\e^{\I t})(t-\pi)}{\pi}\Bigr)
+\mu_-(\e^{\I t})
\exp\Bigl(-\frac{\Theta(\varepsilon,\e^{\I t})(t-\pi)}{\pi}\Bigr)
\\
&= 2\sinh\Theta(\varepsilon,\e^{\I t})
\Bigl(
\mu_+(\e^{\I t})
\exp\Bigl(\frac{\Theta(\varepsilon,\e^{\I t})\,t}{\pi}\Bigr)
+\mu_-(\e^{\I t})
\exp\Bigl(-\frac{\Theta(\varepsilon,\e^{\I t})\,t}{\pi}\Bigr)
\Bigr)
\\
&= 2\varepsilon U(\e^{\I t})\,Y(\varepsilon,t,A),
\end{split}
\end{equation}
where we have used the shorthand
$\mu_\pm(\e^{\I t})
=\sum_{m\in\Z;\,(-1)^m=\pm1}A_m\e^{\I mt}$.

Under the assumption $U(-z)=U(z)$, the function $U(\e^{\I t})$ is a constant 
as long as $t$ stays within $t_0 + \Z \pi$ for a fixed $t_0$.
Accordingly, the difference equation \eqref{yyu} becomes a second-order
recursion in the discrete variable $t/\pi$, with characteristic multipliers
$\exp\!\bigl(\Theta(\varepsilon,\e^{\I t})\bigr)$ and
$\exp\!\bigl(-\Theta(\varepsilon,\e^{\I t})+\I\pi\bigr)$.
Thus \eqref{fin} can in fact be obtained by an elementary argument, by choosing
the two pseudo-constants as
$\sum_{m:\text{even}}A_m\e^{\I mt}$ and
$\sum_{m:\text{odd}}A_m\e^{\I(m-1)t}$.
The calculation above is intended to illustrate how the same conclusion arises
from the perturbative scheme in Section~\ref{sec:nap}.

The RG equation for the renormalized amplitudes $\mathscr{A}_m(t) = \mathscr{A}_m(\varepsilon,t,A)$ can be
derived, for instance, by differentiating \eqref{pgh} with respect to $t$ at
$t=0$ and proceeding similarly.
The outcome, of course, agrees with the direct differentiation of
\eqref{csh2}:
\begin{align}\label{dat}
\frac{\partial \mathscr{A}(\zeta,t)}{\partial t}
&=
\frac{\Theta(\varepsilon,\zeta)}{\pi}\,\mathscr{A}(-\zeta,t)
=
\frac{1}{\pi}
\sum_{k\ge 0}(-1)^k \varepsilon^{2k+1}\mathcal{N}_k\,
U(\zeta)^{2k+1}\mathscr{A}(-\zeta,t).
\end{align}

The expression \eqref{fin} resums the effects of $\varepsilon$ to all orders.
It suggests that the system is stable only when
$\Theta(\varepsilon,\e^{\I t})\in \I\R$ for a generic choice of the bare
amplitudes $\{A_m\}$.

\section{Conclusion}\label{sec:c}

We have developed an RG-based perturbation scheme
based on an exact functional relation among secular coefficients.
This relation provides a unified mechanism underlying
the emergence of the RG equation,
the elimination of secular terms,
and the invertibility between bare and renormalized amplitudes.
The framework applies to a class of first-order systems with semisimple
or nilpotent linear parts, as well as to scalar higher-order equations.

\appendix
\section{Numerical plots for Example \ref{ex:CD}}\label{app:n}

We consider the $y_1=y^*_2$ case of Example \ref{ex:CD}
with the parameters and the initial condition given by 
$\varepsilon= 0.25$, $(R(0), \theta(0)) = (1.3, 2.1)$.
From \eqref{ex1:yy},  it corresponds to setting $y_1(0) = y^*_2(0) = -0.462366+1.55692 \I$.

The direct numerical integration of the original ODE \eqref{ex1:eq} yields Figure \ref{fig:1}.
\begin{figure}[H]
\centering
\includegraphics[scale=0.7]{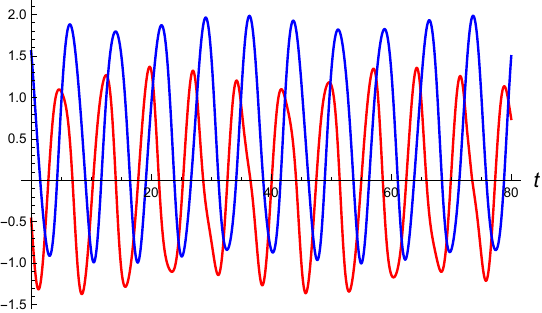}
\caption{The real part (red) and the imaginary part (blue)  of $y_1(t)=y^*_2(t)$.}
\label{fig:1}
\end{figure}

On the other hand, numerical integration of the RG equation \eqref{rteq}, 
retaining terms up to order $\mathcal{O}(\varepsilon^4)$,
yields the functions $R(t)$ and $\theta(t)$ shown in Figure \ref{fig:RT}.
\begin{figure}[H]
\centering
\begin{subfigure}{0.48\textwidth}
\centering
\includegraphics[scale=0.7]{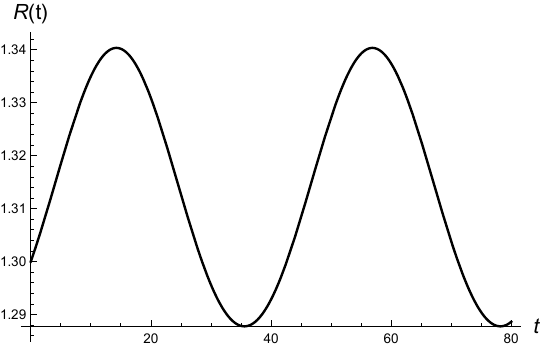}
\caption{$R(t)$}
\end{subfigure}
\hspace{0.01\textwidth}
\begin{subfigure}{0.48\textwidth}
\centering
\includegraphics[scale=0.7]{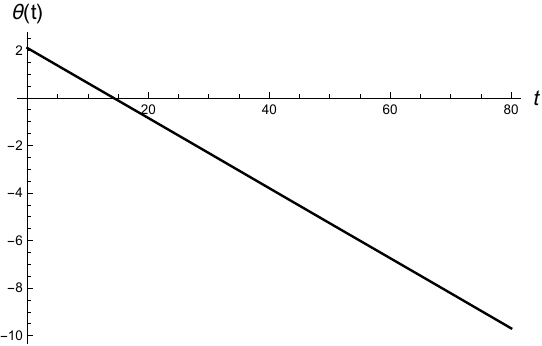}
\caption{$\theta(t)$}
\end{subfigure}

\caption{Numerical solutions of the RG equation \eqref{rteq}.}
\label{fig:RT}
\end{figure}

Finally, we compare the real part in Figure \ref{fig:1}
with that obtained from the renormalized expansion \eqref{ex1:yy}.
\begin{figure}[H]
\centering
\includegraphics[scale=0.7]{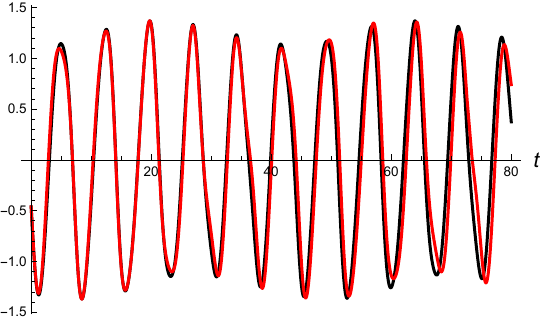}
\caption{Red: $\mathrm{Re}(y_1(t))$ obtained by direct numerical integration of \eqref{ex1:eq} as given in Figure \ref{fig:1}.
Black: $\mathrm{Re}(Y_1(t))$ obtained from the renormalized expansion \eqref{ex1:yy}, using $R(t)$ and $\theta(t)$ in Figure \ref{fig:RT}.}
\end{figure}

\vspace{0.2cm}
\noindent
\section*{Acknowledgments}

A.K. thanks Teiji Kunihiro for his kind interest in the earlier work \cite{K}.
This work was supported by JSPS KAKENHI Grant Number 24K06882.

\let\doi\relax

\end{document}